\newtheorem{assumption}{Assumption}
\newcommand{\lr}[1]{\langle #1 \rangle}
\title{Not a COINcidence: Sub-Quadratic Asynchronous Byzantine Agreement WHP}
\author{Shir Cohen}{Technion, Israel}{shirco@campus.technion.ac.il}{}{}
\author{Idit Keidar}{Technion, Israel}{idish@ee.technion.ac.il}{}{}
\author{Alexander Spiegelman}{VMware Research, Israel}{sasha.spiegelman@gmail.com}{}{}
\authorrunning{S. Cohen, I. Keidar and A. Spiegelman} 
\keywords{shared coin, Byzantine Agreement, VRF, sub-quadratic consensus protocol}
\begin{document}

\maketitle

\begin{abstract}
King and Saia were the first to break the quadratic word complexity bound for Byzantine Agreement in synchronous systems against an adaptive adversary, and Algorand broke this bound with near-optimal resilience (first in the synchronous model and then with eventual-synchrony). Yet the question of asynchronous sub-quadratic Byzantine Agreement remained open.
To the best of our knowledge, we are the first to answer this question in the affirmative.
A key component of our solution is a shared coin algorithm based on a VRF. A second essential ingredient is VRF-based committee sampling, which we formalize and utilize in the asynchronous model for the first time.
Our algorithms work against a delayed-adaptive adversary, which cannot perform after-the-fact removals but has full control of Byzantine processes and full information about communication in earlier rounds. Using committee sampling and our shared coin, we solve Byzantine Agreement with high probability, with a word complexity of $\widetilde{O}(n)$ and $O(1)$ expected time, breaking the $O(n^2)$ bit barrier for asynchronous Byzantine Agreement.
\end{abstract}

\section{Introduction}

 Byzantine Agreement (\emph{BA})~\cite{lamport2019byzantine} has been studied for four decades by now, but until recently, has been considered at a fairly small scale. In recent years, however, we begin to see practical use-cases of BA in large-scale systems, which motivates a push for reduced communication complexity. In deterministic algorithms, Dolev and Reischuk’s renown lower bound stipulates that $\Omega(n^2)$ communication is needed \cite{DolevBound}, and until fairly recently, almost all randomized solutions have also had (expected) quadratic word complexity. Recent work has broken this barrier~\cite{king2011breaking,Algorand,nakamoto2012bitcoin}, but not in asynchronous settings. We present here the first sub-quadratic asynchronous Byzantine Agreement algorithm.
 Our algorithm is randomized and solves binary BA \emph{with high probability (whp)}, i.e., with probability that tends to $1$ as $n$ goes to infinity.

We consider a system with a static set of $n$ processes, in the so-called ``permissioned'' setting, where the ids of all processes are well-known. Our algorithm tolerates $f$ failures for $n\approx4.5f$ (asymptotically). In addition, we assume a trusted \emph{public key infrastructure} (PKI) that allows us to use \emph{verifiable random functions} (VRFs)~\cite{micali1999VRF}.

We assume a strong adversary that can adaptively take over processes, whereupon it has full access to their private data. It further sees all messages in the system. But we do limit the adversary in two ways. First, we assume that it is computationally bounded so that we may use the PKI. Second, as proven in~\cite{abraham2019communication} for the synchronous model, achieving sub-quadratic complexity is impossible when the adversary can perform after-the-fact removal, meaning that it can delete messages that were sent by correct processes before corrupting these processes. Here, we adapt the no after-the-fact removal assumption to the asynchronous model, and define a \emph{delayed-adaptive adversary} based on causality~\cite{lamport2019time}.


We formalize the concept of VRF-based committee sampling as used in Algorand~\cite{Algorand,chen2018algorand}, and adapt it to the asynchronous model. 
In a nutshell, the idea is to use a VRF seeded with each process's private key in order to sample uniformly at random
$O(\log n)$ processes for a \emph{committee}, and to have different committees execute different parts of the BA protocol. Each committee is used for sending exactly one protocol message and messages are sent only by committee members, thus reducing the communication cost.
Whereas in Algorand's synchronous model a process can be sure it receives messages from all correct committee members by a timeout, in the asynchronous model this is not the case. Rather, processes make progress by waiting for some threshold number of messages. Without committees, this threshold is normally $n-f$ (waiting for more than $n-f$ processes might violate termination). But since committees are randomly sampled,  we do not know the committee's exact size or the number of Byzantine processes in it. Thus, adapting committees to this model is somewhat subtle and requires ensuring certain conditions regarding the intersection of subsets of committees. In this paper we identify sufficient conditions on sampling, which ensure safety and liveness with high probability. 

Randomized BA algorithms can be seen as if processes toss a random coin at some point during the protocol. While some protocols toss a local coin~\cite{ben1983another,bracha1983resilient} and require exponential expected time to reach agreement, others use the abstraction of a \emph{shared coin}, which involves communication among processes and results in the same coin toss with some well defined \emph{success rate}~\cite{rabin1983randomized,canetti1993fast,cachin2005random,Algorand,king2013byzantine}. 
In this work we present an asynchronous shared coin algorithm that uses a VRF and provides a constant success rate with an equal probability for tossing 0 and 1. Unlike previous shared coin implementations, our solution does not require a priori knowledge of the set of participants, which makes it useful in committee-based constructions.
We then adapt our coin to work with committees and use it to devise a sub-quadratic BA algorithm.


In summary, this paper presents the first formalization of randomly sampled committees using cryptography in asynchronous settings. Based on this technique, it presents the first sub-quadratic asynchronous shared coin and BA whp algorithms. Our algorithms have expected $\widetilde{O}(n)$ word complexity and $O(1)$ expected time.

{\bf Roadmap.} The rest of this paper is organized as follows. Section \ref{sec:model} describes the model;  Section \ref{sec:related_work} reviews related work. In Section \ref{sec:shared_coin}, we present our shared coin algorithm and in Section \ref{sec:committees}, we formalize committee sampling. Then, in Section \ref{sec:consensus}, we use the coin and the committee sampling to construct a BA whp algorithm. We end with some concluding remarks
in Section \ref{sec:conclusions}.

\section{Model and Preliminaries}
\label{sec:model}

We consider a distributed system consisting of a well-known static set $\Pi$ of $n$ processes and a \emph{delayed-adaptive adversary} (see definition below). 
The adversary may adaptively corrupt up to  $f=(\frac{1}{3}-\epsilon)n$ processes in the course of a run, where $\max \{\frac{3}{8\ln n},0.109\}+\frac{1}{8\ln n}<\epsilon < \frac{1}{3}$.
A corrupted process is \emph{Byzantine}; it may deviate arbitrarily from the protocol. In particular, it may crash, fail to send or receive messages, and send arbitrary messages. As long as a process is not corrupted by the adversary, it is \emph{correct} and follows the protocol.



{\bf Delayed-adaptive adversary.} In the synchronous model, one defines a \emph{late} adversary~\cite{robinson2018breaking, klonowski2019ordered,awerbuch2007denial,ahmadi2020cost}, which at the beginning of round $r$, can observe the state of the system at the beginning of round $r-1$. This assumption prevents “after-the-fact” removals of messages sent by processes before being taken over by the adversary~\cite{abraham2019communication,Algorand}, as required for achieving a sub-quadratic communication cost. We adapt this assumption to the asynchronous model. Since in asynchronous models the natural order between messages is Lamport's happens-before relation~\cite{lamport2019time}, we use the notion of causality instead of `rounds' to define what messages the adversary may observe when scheduling other messages. We denote by $m\rightarrow m'$ the fact that $m$ causally precedes $m'$. The adversary is formally defined as follows:

\begin{definition}[delayed-adaptive adversary]

The delayed-adaptive adversary may adaptively corrupt up to $f$ processes over the course of a run and schedules all messages. The adversary has full access to corrupted processes' private information and can observe all communication, but it can use the contents of a message $m$ sent by a correct process for scheduling a message $m'$ only if $m\rightarrow{m'}$.
\end{definition}

In addition, we assume that once the adversary takes over a process, it cannot “front run” messages that that process had already sent when it was correct, causing the correct messages to be supplanted. Blum et al.~\cite{cryptoeprint:2020:851} achieve this property by using a separate key to encrypt each message, and deleting the secret key immediately thereafter.



{\bf Cryptographic tools.} We assume a trusted PKI, where private and public keys for the processes are generated before the protocol begins and processes cannot manipulate their public keys. In addition, we assume that the adversary is computationally bounded, meaning that it cannot obtain the private keys of processes unless it corrupts them. Furthermore, we assume that the PKI is in place from the outset. (Recall that we assume a permissioned setting, so the public keys of the $n$ processes are well-known). These assumptions allow us to use verifiable random functions, as we now define.

\newcommand{\VRF}{\mathrm{VRF}}
\newcommand{\VRFVer}{\mathrm{VRF\text{-}Ver}}

A \emph{verifiable random function ($\VRF$)} is a pseudorandom function that provides a proof of its correct computation~\cite{micali1999VRF}.
Given a secret key $sk$, one can evaluate the VRF on any input $x$ and obtain a pseudorandom output $y$ together with a proof $\pi$, i.e., $\lr{y,\pi}=\VRF_{sk}(x)$.
From $\pi$ and the corresponding public key $pk$, one can verify that $y$ is correctly computed from $x$ and $sk$ using the function $\VRFVer_{pk}(x,\lr{y,\pi})$. 
Additionally, a VRF needs to satisfy \emph{uniqueness}. More formally, a VRF guarantees the following properties:

  \begin{itemize}
    
	\item Pseudorandomness: for any $x$, it is infeasible to distinguish $y=\VRF_{sk}(x)$ from a uniformly random value without access to $sk$.
    
	\item Verifiability: $\VRFVer_{pk}(x,\VRF_{sk}(x)) = true$. 
    
    \item Uniqueness: it is infeasible to find $x, y_1, y_2, \pi_1, \pi_2$ such that $y_1 \neq y_2$ but 
		\-$\VRFVer_{pk}(x, \-\lr{y_1,\pi_1})=\VRFVer_{pk}(x, \lr{y_2,\pi_2})=true$.
  \end{itemize}

Efficient constructions for VRFs have been described in the literature~\cite{dodis2005VRF,Franklin2013VRF}.


{\bf Communication.} We assume that every pair of processes is connected via a reliable link. Messages are authenticated in the sense that if a correct process $p_i$ receives a message $m$ indicating that $m $ was sent by a correct process $p_j$, then $m$ was indeed generated by $p_j$ and sent to $p_i$. The network is asynchronous, i.e., there is no bound on message delays.


{\bf Complexity.} We use the following standard complexity notions~\cite{abraham2019asymptotically, mostefaoui2015signature}.
While measuring complexity, we allow a \emph{word} to contain a signature, a VRF output, or a value from a finite domain. We define the \textit{duration} of an execution as the longest sequence of messages that are causally related in this execution until all correct processes decide.
We measure the expected \emph{word communication complexity} of our protocols as
the maximum of the expected total number of words sent by correct processes and the expected \emph{running time} of our protocol as the maximum of the expected duration.
In both cases the maximum is computed over all inputs and applicable adversaries and expectation is taken over the random VRF outputs.
 
\section{Related Work}
\label{sec:related_work}

{\bf Lower bounds.} 
Our assumptions conform with a number of known bounds. Deterministic consensus is impossible in an asynchronous system if even one process may crash (by FLP~\cite{fischer1985impossibility}) and requires $\Omega(n^2)$ communication even in synchronous systems~\cite{DolevBound}. As for randomized Byzantine Agreement, Abraham et al.
\cite{abraham2019communication} state that disallowing
after-the-fact removal is necessary even in synchronous settings for achieving sub-quadratic communication.

{\bf Asynchronous BA and shared coin algorithms.} 
The algorithms we present in this paper belong to the family of asynchronous BA algorithms, which sacrifice determinism in order to circumvent FLP. We compare our solutions to existing ones in Table~\ref{tab:survey}.
 
Ben-Or~\cite{ben1983another} suggested a protocol with resilience $n>5f$. This protocol uses a local coin (namely, a local source of randomness) and its expected time complexity is exponential (or constant if $f=O(\sqrt n)$). Bracha~\cite{bracha1987asynchronous} improved the resilience to $n>3f$ with the same complexity. The complexity can be greatly reduced by replacing the local coin with a shared one with a guaranteed success rate.

Later works presented the shared coin abstraction and used it to solve BA with $O(n^2)$ communication. Rabin~\cite{rabin1983randomized} was the first to do so, suggesting a protocol with resilience $n>10f$ and a constant expected number of rounds. Cachin et al.~\cite{cachin2005random} were the first to use a shared coin to solve BA with $O(n^2)$ communication and optimal resilience.
Mostefaoui et al.~\cite{mostefaoui2015signature} then presented a signature-free BA algorithm with optimal resilience and $O(n^2)$ messages that uses a shared coin abstraction as a black box; the shared coin algorithm we provide in Section 4 can be used to instantiate this protocol. All of the aforementioned algorithms solve binary BA, where the processes’ initial values are 0 and 1; a recent work solved multi-valued BA with the same $O(n^2)$ word complexity~\cite{abraham2019asymptotically}.

BA algorithms also differ in the cryptographic assumptions they make and the cryptographic tools they use. Rabin’s coin~\cite{rabin1983randomized} is based on cryptographic secret sharing~\cite{shamir1979share}. Some later works followed suit, and used cryptographic abstractions such as threshold signatures~\cite{abraham2019asymptotically,cachin2005random}. Other works forgo cryptography altogether and instead consider a full information model, where there are no restrictions on the adversary’s computational power~\cite{canetti1993fast,king2013byzantine}. In this model, the problem is harder, and existing works achieve very low resilience~\cite{king2013byzantine} ($n>400f$) or high communication complexity~\cite{canetti1993fast}. In this paper we do use cryptographic primitives. We assume a computationally bounded adversary and rely on the abstraction of a VRF~\cite{micali1999VRF}. VRFs were previously used in blockchain protocols~\cite{Algorand,hanke2018dfinity,Helix} and were also used by Micali~\cite{DBLP:conf/innovations/Micali17} to construct a shared coin in the synchronous model.

Several works~\cite{Abraham2018HotStuffTL,baudet2019state,kwon2014tendermint,naor2019cogsworth,spiegelman2020search} solve BA with subquadratic complexity in the so-called optimistic case (or ``happy path''), when communication is timely and a correct process is chosen as a ``leader''. In contrast, we focus on the worst-case asynchronous case.

\begin{table}
\caption{Asynchronous Byzantine Agreement algorithms.}

\setlength\extrarowheight{3pt}

\begin{tabular}{lccccc}

\toprule
Protocol & n $>$ & Adversary & Word complexity & Termination & Safety \\

\midrule

Ben-Or~\cite{ben1983another}   & $5f$ & adaptive & $O(2^n)$ & w.p. 1 & $\checkmark$ \\

Rabin~\cite{rabin1983randomized}   & $10f$ & adaptive & $O(n^2)$ & w.p. 1 & $\checkmark$ \\ 

Bracha~\cite{bracha1987asynchronous} & $3f$  & adaptive & $O(2^n)$ & w.p. 1 & $\checkmark$ \\ 

Cachin et al.~\cite{cachin2005random} & $3f$  & adaptive & $O(n^2)$ &  w.p. 1 & $\checkmark$ \\ 

King-Saia~\cite{king2013byzantine} & $400f$  & adaptive & polynomial & whp & $\checkmark$ \\ 

MMR~\cite{mostefaoui2015signature} & $3f$  & adaptive & $O(n^2)$ & w.p. 1 & $\checkmark$ \\ 


Our protocol & $\approx4.5f$  & delayed-adaptive & $\tilde{O}(n)$ & whp & whp \\ 

\bottomrule

\end{tabular}
\label{tab:survey}
\end{table}

{\bf Committees.} 
We use committees in order to reduce the word complexity and allow each step of the protocol to be executed by only a fraction of the processes. King and Saia used a similar concept and presented the first sub-quadratic BA protocol in the synchronous model~\cite{king2011breaking}. Algorand proposed a synchronous algorithm~\cite{Algorand} (and later extended it to eventual synchrony~\cite{chen2018algorand}) where committees are sampled randomly
using a VRF. Each process executes a local computation to sample itself to a committee, and hence the selection of processes does not require interaction among them. We follow this approach in this paper and adapt the technique to the asynchronous model.

Following initial publication of our work, Blum et al.~\cite{cryptoeprint:2020:851} have also achieved subquadratic BA WHP under an adaptive adversary. Their assumptions are incomparable to ours -- while they strengthen the adversary to remove the delayed adaptivity requirement, they also strengthen the trusted setup. Specifically, they use a trusted dealer to a priori determine the committee members, flip the shared coin, and share it among the committee members. In contrast, we use a peer-to-peer protocol to generate randomness, and require delayed adaptivity in order to prevent the adversary from tampering with this randomness. As in our protocol, setup has to occur once and may be used for any number of BA instances.

\section{Shared Coin}
\label{sec:shared_coin}

We describe here an asynchronous protocol for a shared coin with a constant success rate against the delayed-adaptive adversary. We assume that for every $r\in \mathbb{N}$, shared\textunderscore coin$(r)$ is invoked by all correct processes and that the invocation of shared\textunderscore coin$(r)$ by some process $p$ is causally independent of its progress at other processes.
The definition of a shared coin is given below.



\begin{definition}[Shared Coin]

A shared coin with success rate $\rho$ is a shared object that generates an infinite sequence of binary outputs. For each execution of the procedure shared\textunderscore coin$(r)$ with 
$r\in \mathbb{N}$, all correct processes output $b$ with probability at least $\rho$, for any value $b\in\{0,1\}$.
\end{definition}

The pseudo-code for our shared coin is presented in Algorithm \ref{alg:shared_coin_protocol}. Our protocol is composed of two phases of messages passing. Each process first samples the VRF with its private key and the protocol's argument in order to generate a random initial value. For brevity, we denote by $VRF_i$ the VRF with $p_i$'s private key. Using a VRF to generate a random initial value effectively weakens the adversary as Byzantine processes can neither choose their initial values nor equivocate. If a Byzantine process would try to act maliciously, the VRF proof would easily expose it and its message would be ignored.

In each phase of the protocol, each process sends one value to every other process. The receiver validates the received values using the VRF proofs, which are sent along with the values. We omit the proof validation from the code for clarity. After two phases of communication, each process chooses the minimum value it received in the second phase and outputs its least significant bit.
We follow the concept of a common core, as presented by Attiya and Welch for the crash failure model \cite{attiya2004distributed}, and argue that if a core of $f+1$ correct processes hold the global minimum value at the end of phase 1, then by the end of the following phase all processes receive this value.
We exploit the $\epsilon$ parameter in our resilience definition to bound the number of values held by $f+1$ correct processes. We show that this number is linear in $n$ and hence with a constant positive probability, by the end of the second phase, all correct processes receive the global minimum among the VRF outputs and therefore produce the same output.

\begin{algorithm}
\caption{Protocol shared\textunderscore coin($r$): code for process $p_i$}

\begin{algorithmic}[1]

\State Initially \emph{first-set, second-set} $=\emptyset$

\State
$v_i \gets \emph{VRF}_i(r)$
\State send $\lr{\textsc{first},v_i}$
to all processes 
\Statex

\Receiving{ $\lr{\textsc{first},v_j}$ with valid $v_j$
from $p_j$ }

    	\If{$v_j < v_i$} $v_i \gets v_j$
	    \EndIf
    	
    	\State $\emph{first-set} \gets \emph{first-set} \cup \{j\}$   
    	\State \textbf{when} |\emph{first-set}| $=n-f$ for the first time \label{l.wait_first}
    	\State \hspace{\algorithmicindent} send $\lr{\textsc{second},v_i}$  to all processes  \label{l.send_second}

\EndReceiving
\Statex

\Receiving{ $\lr{\textsc{second},v_j}$ with valid $v_j$ from $p_j$}

   	\If{$v_j < v_i$} $v_i \gets v_j$
    \EndIf
	\State $\emph{second-set} \gets \emph{second-set} \cup \{j\}$   
	
	\State \textbf{when} $|\emph{second-set}|$ $=n-f$ for the first time
    \State \hspace{\algorithmicindent} \textbf{return} \emph{LSB}$(v_i)$

\EndReceiving

\end{algorithmic}
\label{alg:shared_coin_protocol}
\end{algorithm}

We now prove that the shared coin has a constant success rate.
We say that a value $v$ is \emph{common} if at least $f+1$ correct processes receive $v$ by the end of phase 1. Denote by $c$ be the number of different common values.
The next two lemmas give a lower bound on $c$ and on the probability that the global minimum among the VRF outputs is common.

\begin{lemma}
\label{common_vals}
In Algorithm \ref{alg:shared_coin_protocol}, $c \geq \frac{9\epsilon}{1+6\epsilon}n$.
\end{lemma}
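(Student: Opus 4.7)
The plan is a double-counting argument on incidences between correct receivers and senders in their phase-1 \emph{first-sets}. Write $C$ for the set of correct processes, so $|C| \geq n-f$. For each $j \in \{1,\ldots,n\}$, let $c_j$ be the number of correct $p_i$ whose \emph{first-set} contains $j$ at the instant $p_i$'s \emph{first-set} first reaches size $n-f$ (the moment it sends its \textsc{second} message). Because VRF outputs are unique per process per round, and verification forces any Byzantine $p_j$ whose first message is accepted to commit to the single value determined by its secret key, the number of distinct common values equals $c = |\{j : c_j \geq f+1\}|$.

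First I would compute $\sum_j c_j$ in two ways. Each correct receiver's \emph{first-set} has size exactly $n-f$ at the moment phase 1 ends, giving $\sum_j c_j = |C|(n-f)$. On the other hand, splitting into the $c$ common senders (each contributing at most $|C|$) and the at most $n-c$ non-common senders (each contributing at most $f$) yields $|C|(n-f) \leq c|C| + (n-c)f$. Rearranging gives $c \geq (|C|(n-f) - nf)/(|C| - f)$, and a quick derivative check (the derivative equals $f^2/(|C|-f)^2 \geq 0$) shows the right-hand side is monotone increasing in $|C|$, so the worst case is $|C| = n - f$, producing
\begin{equation*}
c(n-2f) \;\geq\; (n-f)^2 - nf \;=\; n^2 - 3nf + f^2.
\end{equation*}

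To finish, I would drop the non-negative $f^2$ term to obtain the weaker bound $c \geq n(n-3f)/(n-2f)$, then substitute $f = (\tfrac{1}{3} - \epsilon)n$, for which $n - 3f = 3\epsilon n$ and $n - 2f = \tfrac{1+6\epsilon}{3}n$; this simplifies to the claimed inequality $c \geq \tfrac{9\epsilon}{1+6\epsilon}n$.

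There is no deep obstacle here; the argument is essentially bookkeeping. The two subtle points to be careful about are (i) that Byzantine senders must be counted alongside correct ones in $\sum_j c_j$, since the uniqueness of their VRF outputs (enforced by verification) means they may well be common, and (ii) that the equality $\sum_j c_j = |C|(n-f)$ relies on using the exact instant each correct receiver sends its \textsc{second} message, when its \emph{first-set} has size exactly $n-f$ (it can continue to grow afterward, which would only help).
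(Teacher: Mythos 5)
Your proof is correct and follows essentially the same double-counting argument as the paper: both count receiver–sender incidences at the moment each correct receiver's \emph{first-set} reaches size $n-f$, split the senders by a threshold on how many receivers heard them, and derive $c \geq n(n-3f)/(n-2f)$ before substituting $f=(\tfrac{1}{3}-\epsilon)n$. The only cosmetic difference is that you restrict the rows to correct receivers and use the threshold $f+1$ directly, whereas the paper pads Byzantine rows with $n-f$ arbitrary ones and uses the threshold $2f+1$ so that at least $f+1$ of them are correct; your variant even yields the marginally stronger intermediate bound $(n^2-3nf+f^2)/(n-2f)$ before you discard the $f^2$ term.
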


\begin{proof}
In a given run of the algorithm, define a table T with $n$ rows and $n$ columns, where for each correct process $p_i$ and each $0\leq j\leq n-1$, $T[i,j]=1$ iff $p_i$ receives $\lr{\textsc{first},v}$ from $p_j$ before sending the second message in line \ref{l.send_second}. Each row of a correct process contains exactly $n-f$ ones since it waits for $n-f$ $\lr{\textsc{first},v}$ messages (line \ref{l.wait_first}). Each row of a faulty process is arbitrarily filled with $n-f$ ones and $f$ zeros. Thus, the total number of ones in the table is $n(n-f)$ and the total number of zeros is $nf$.
Let $k$ be the number of columns with at least $2f+1$ ones. Because each column represents a value and out of the $2f+1$ ones at least $f+1$ represent correct processes that receive this value, $c\geq k$.
Denote by $x$ the number of ones in the remaining columns. Because each column has at most $n$ ones we get:
\begin{equation}
x \geq n(n-f)-kn.
\end{equation}
And because the remaining columns have at most $2f$ ones:
\begin{equation}
x \leq 2f(n-k).
\end{equation}

Combining $(1),(2)$ we get:
\begin{equation*}
2f(n-k) \geq n(n-f)-kn
\end{equation*}
\begin{equation*}
2fn-2fk \geq n^2-fn-kn
\end{equation*}
\begin{equation*}
(n-2f)k \geq n^2-3fn
\end{equation*}
\begin{equation*}
k \geq \frac{n(n-3f)}{n-2f}.
\end{equation*}

Because $f=(\frac{1}{3}-\epsilon)n$ we get:
\begin{equation*}
\label{eq:4}
c\geq k\geq \frac{n(n-3 (\frac{1}{3}-\epsilon)n)}{n-2 (\frac{1}{3}-\epsilon)n} = 
\frac{n(1-1+3\epsilon)}{1-\frac{2}{3}+2\epsilon} =
\frac{9\epsilon}{1+6\epsilon}n, \textrm{as required.}
\end{equation*} 
\end{proof}



Let $v_{min}\triangleq \displaystyle \min_{ p_i \in \Pi}\{VRF_i(r)\}$. We prove that with a constant probability, it is common.

\begin{lemma}
\label{common_prob}
 $Prob[v_{min}\;is\;common]\geq \frac{c}{n}-\frac{1}{3}+\epsilon$.
\end{lemma}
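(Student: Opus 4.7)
The plan is to lower-bound $\Pr[v_{\min}\text{ is common}]$ by focusing on the stronger event that the process holding $v_{\min}$ is correct and that its value is common. Let $C \subseteq \Pi$ be the set of correct processes, $B = \Pi \setminus C$ the Byzantine ones, $n_C = |C|$, and write $V_C = \{\mathrm{VRF}_i(r) : i \in C\}$, $V_B = \{\mathrm{VRF}_i(r) : i \in B\}$. Let $S \subseteq \Pi$ be the set of processes whose VRF output is common, so $|S| \ge c$ by Lemma~\ref{common_vals}, and partition $S = S_C \cup S_B$ according to correct/Byzantine. Since $|B| \le f$, we get $|S_C| \ge c - f$. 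Let $p^{\ast}$ denote the process whose VRF output equals $v_{\min}$.

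The central claim I would establish is that $S_C$ is independent of $V_C$. A correct $p_j$ lies in $S_C$ iff its FIRST message is received by at least $f+1$ correct processes before each of them sends SECOND, which is a property determined entirely by the schedule of the initial FIRST deliveries. By the delayed-adaptive rule, the scheduling of any initial FIRST message $m$ may use the contents only of messages $m'$ with $m' \to m$; since no correct process's initial FIRST message causally precedes another's, the schedule---and hence $S_C$---is a function of $V_B$ and the adversary's randomness alone.

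Combining this independence with VRF pseudorandomness (which makes the $n$ outputs independent and uniform), I would invoke two symmetries. First, conditional on $V_B$ and $S_C$, the position $\arg\min V_C$ is uniform over $C$ and independent of its value, so $\Pr[\arg\min V_C \in S_C \mid V_B, S_C] = |S_C|/n_C$. Second, by symmetry across all $n$ iid outputs, $\Pr[p^{\ast} \in C] = n_C / n$. Using independence of position and value for iid samples, these factor as
$$\Pr[v_{\min}\text{ is common}] \;\ge\; \Pr[p^{\ast} \in S_C] \;=\; \frac{|S_C|}{n_C}\cdot \frac{n_C}{n} \;=\; \frac{|S_C|}{n} \;\ge\; \frac{c-f}{n}.$$
Substituting $f = (\tfrac{1}{3} - \epsilon)n$ produces the stated bound $c/n - \tfrac{1}{3} + \epsilon$.

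The hard part will be pinning down the independence claim rigorously in the presence of adaptive corruption and an adversary that sees every VRF output the moment a FIRST message is sent. The crux is that the delayed-adaptive rule is precisely what forbids the adversary from using $v_i$ to selectively delay $p_i$'s own FIRST message---the only manoeuvre that could deliberately exclude the overall argmin from $S_C$. Once this is nailed down, the rest is a short counting/symmetry computation.
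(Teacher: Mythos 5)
Your proposal is correct and follows essentially the same route as the paper: both arguments rest on the observation that the delayed-adaptive adversary must schedule the causally concurrent \textsc{first} messages (and choose whom to corrupt before they are sent) obliviously to the correct processes' VRF outputs, so the global argmin is uniformly located and lands on a common, correct value with probability at least $\frac{c-f}{n}$. Your factorization $\frac{|S_C|}{n_C}\cdot\frac{n_C}{n}$ is algebraically identical to the paper's $(1-\frac{f}{n})\cdot\frac{c-f}{n-f}$; you merely make the independence/symmetry step more explicit than the paper does.
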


\begin{proof}
Notice that we assume that the invocation of shared\textunderscore coin$(r)$ by each process is causally independent of its progress at other processes. Hence, for any two processes $p_i, p_j$, the messages $\lr{\textsc{first},v_i}$, $\lr{\textsc{first},v_j}$ are causally concurrent. Thus, due to our \emph{delayed-adaptive adversary} definition,
these messages are scheduled by the adversary regardless of their content, namely their VRF random values.
Notice that the adversary can corrupt processes before they initially send their VRF values. Since the adversary cannot predict the VRF outputs of the processes, the probability that the process holding $v_{min}$ is corrupted before sending its \textsc{first} messages is at most $\frac{f}{n}$.
The adversary is oblivious to the correct processes' VRF values when it schedules their first phase messages. Therefore, each of them has the same probability to become common. Since at most $f$ common values originate at Byzantine processes, this probability is at least $\frac{c-f}{n-f}$.
 We conclude that $v_{min}$ is common with probability at least $(1-\frac{f}{n})\frac{c-f}{n-f}=(1-\frac{(\frac{1}{3}-\epsilon)n}{n})\frac{c-(\frac{1}{3}-\epsilon)n}{n-(\frac{1}{3}-\epsilon)n}=(\frac{2}{3}+\epsilon)\frac{c-(\frac{1}{3}-\epsilon)n}{(\frac{2}{3}+\epsilon)n}=\frac{c-(\frac{1}{3}-\epsilon)n}{n}=\frac{c}{n}-\frac{1}{3}+\epsilon$.
 
\end{proof}

We next observe that if $v_{min}$ is common, then it is shared by all processes.

\begin{lemma}
\label{common_global_min}
If $v_{min}$ is common then each correct process holds $v_{min}$ at the end of phase 2.
\end{lemma}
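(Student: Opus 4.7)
The plan is to exploit the quorum intersection between the set of correct processes that hold $v_{min}$ at the end of phase 1 and the set of $n-f$ second-phase messages that any correct process waits for.

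First, I would unpack what "common" means at the transition into phase 2. By definition, if $v_{min}$ is common, then at least $f+1$ correct processes receive $\lr{\textsc{first}, v_{min}}$ during phase 1. Since each correct process maintains the minimum of all $\textsc{first}$ values it has seen (and $v_{min}$ is the global minimum of the VRF outputs, so no strictly smaller value can ever arrive), each of these $f+1$ correct processes has $v_i = v_{min}$ at the moment it sends its $\lr{\textsc{second}, v_i}$ message on line~\ref{l.send_second}. Call this set of correct senders $C$, with $|C| \geq f+1$.

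Second, I would argue by a simple counting argument that every correct process $p$ collects a $\textsc{second}$ message from at least one member of $C$ before returning. Indeed, $p$ waits for $|\emph{second-set}| = n-f$ distinct senders. The number of processes outside $C$ is at most $n - (f+1) = n-f-1 < n-f$, so $p$ cannot fill its $\emph{second-set}$ without including at least one process from $C$. That message carries $v_{min}$ and passes VRF verification (since it originates from a correct process).

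Third, once $p$ receives a $\textsc{second}$ message carrying $v_{min}$, the update rule $v_i \gets v_j$ when $v_j < v_i$ guarantees $v_p = v_{min}$ from that point onward, because $v_{min}$ is the minimum over all VRF outputs and any value $p$ would otherwise adopt is itself the output of some $\VRF_j(r)$. Hence $p$ holds $v_{min}$ at the end of phase 2, which is the claim.

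I do not anticipate a real obstacle here; the only subtle point is making sure the quorum intersection uses the right totals (the fact that $|C| \geq f+1$, not merely that $f+1$ processes once saw $v_{min}$ in some intermediate state), and that the VRF uniqueness rules out a Byzantine process forging a smaller "valid" value that could displace $v_{min}$ on the receive path.
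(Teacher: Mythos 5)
Your proof is correct and follows essentially the same route as the paper's: the set of at least $f+1$ correct processes holding $v_{min}$ when they send their $\textsc{second}$ messages must intersect the $n-f$ senders each correct process waits for, and the minimum-update rule then propagates $v_{min}$. Your version merely spells out the counting and the fact that no smaller valid value can exist, which the paper leaves implicit.
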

\begin{proof}
Since $v_{min}$ is common, at least $f+1$ correct processes receive it by the end of phase 1 and update their local values to $v_{min}$. During the second phase, each correct process hears from $n-f$ processes. This means that it hears from at least one correct process that has updated its value to $v_{min}$ and sent it.
\end{proof}

\begin{lemma}
\label{safety}
The coin's success rate is at least $\frac{18\epsilon^2+24\epsilon-1}{6(1+6\epsilon)}$.
\end{lemma}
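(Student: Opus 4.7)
The plan is to combine the three preceding lemmas and then account for the least-significant-bit step. By Lemma~\ref{common_global_min}, whenever $v_{min}$ is common, every correct process holds $v_{min}$ at the end of phase 2, so every correct process returns $\mathrm{LSB}(v_{min})$ (the same bit). Thus for each $b \in \{0,1\}$,
\begin{equation*}
\Pr[\text{all correct processes output } b] \;\geq\; \Pr[\,v_{min}\text{ is common}\,\wedge\,\mathrm{LSB}(v_{min})=b\,].
\end{equation*}

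Next, I would lower-bound the first factor by chaining Lemmas~\ref{common_vals} and~\ref{common_prob}: plug $c \geq \tfrac{9\epsilon}{1+6\epsilon}n$ into $\Pr[v_{min}\text{ is common}] \geq \tfrac{c}{n}-\tfrac{1}{3}+\epsilon$, and then reduce
\begin{equation*}
\frac{9\epsilon}{1+6\epsilon}-\frac{1}{3}+\epsilon \;=\; \frac{27\epsilon + 3\epsilon(1+6\epsilon) - (1+6\epsilon)}{3(1+6\epsilon)} \;=\; \frac{18\epsilon^2+24\epsilon-1}{3(1+6\epsilon)}
\end{equation*}
over the common denominator $3(1+6\epsilon)$. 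Note that the lower bound $\epsilon > \max\{\tfrac{3}{8\ln n},0.109\}+\tfrac{1}{8\ln n}$ in the model section is exactly what ensures $18\epsilon^2+24\epsilon-1 > 0$, so this bound is nontrivial.

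The remaining step is to argue that, conditioned on $v_{min}$ being common, $\mathrm{LSB}(v_{min})$ equals $b$ with probability $\tfrac{1}{2}$. By pseudorandomness of the VRF, the outputs $\{\VRF_i(r)\}_i$ are computationally indistinguishable from independent uniform strings; so the event ``$v_{min}$ is common'' is determined (up to negligible error) by the relative ordering of the high-order bits, while $\mathrm{LSB}(v_{min})$ is an essentially independent uniform bit. I expect this decoupling between ``which index attains the minimum and whether it is common'' and ``the LSB of that minimum'' to be the main subtlety; the cleanest way to justify it is a symmetry argument in which flipping the LSB of every VRF output preserves the ordering (with overwhelming probability) and hence the event of being common, while negating the bit to be returned.

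Putting the two factors together,
\begin{equation*}
\Pr[\text{all correct processes output } b] \;\geq\; \frac{1}{2} \cdot \frac{18\epsilon^2+24\epsilon-1}{3(1+6\epsilon)} \;=\; \frac{18\epsilon^2+24\epsilon-1}{6(1+6\epsilon)},
\end{equation*}
which is the claimed success rate for both $b=0$ and $b=1$.
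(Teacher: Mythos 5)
Your proof is correct and follows essentially the same route as the paper's: chain Lemma~\ref{common_global_min}, Lemma~\ref{common_prob}, and Lemma~\ref{common_vals}, and pay a factor of $\tfrac{1}{2}$ for the least-significant bit. The only difference is that you explicitly justify why $\mathrm{LSB}(v_{min})$ is an (essentially) unbiased bit independent of the event that $v_{min}$ is common, a step the paper's proof asserts as an equality without comment; your symmetry argument is a reasonable way to discharge it.
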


\begin{proof}
We bound the probability that all correct processes output $b\in \{0,1\}$ as follows:

 $Prob[$all\;correct\;processes\;output\;$b$ $]\geq Prob[$all\;correct\;processes\;have\;the\;same\;$v_i$ at the end of phase 2\;and\;its\;LSB\;is\;$b]\geq Prob[$all\;correct\;processes have $v_i=v_{min}$ at the end of phase 2 and its LSB is $b]=\frac{1}{2} \cdot Prob[$all\;correct\;processes have $v_i=v_{min}$ at the end of phase 2] $\stackrel{\text{Lemma \ref{common_global_min}}}{\geq}
 \frac{1}{2} \cdot Prob[v_{min}$ is common$]\stackrel{\text{Lemma \ref{common_prob}}}{\geq}
 \frac{1}{2} (\frac{c}{n}-\frac{1}{3}+\epsilon)
 \stackrel{\text{Lemma \ref{common_vals}}}{\geq}    \frac{18\epsilon^2+24\epsilon-1}{6(1+6\epsilon)}$.

\end{proof}

\begin{remark}
Notice that for $\epsilon=\frac{1}{3}$ (i.e., $f=0$)  it holds that the coin's success rate is $\frac{1}{2}$ and we get a perfect fair coin.
\end{remark}

We have shown a bound on the coin's success rate in terms of $\epsilon$. Since $\epsilon>0.109$, the coin's success rate is a positive constant.
We next prove that the coin ensures liveness.
 
\begin{lemma}
\label{shared_coin_termination}
If all correct processes invoke Algorithm \ref{alg:shared_coin_protocol} then all correct processes return.
\end{lemma}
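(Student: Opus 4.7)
The plan is a straightforward two-phase propagation argument, exploiting only the assumption that all correct processes invoke the protocol and the reliable-link property of the network.

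First, I would note that by assumption every correct process executes the VRF evaluation and then sends a $\langle\textsc{first},v_i\rangle$ message to all processes. Since there are at least $n-f$ correct processes and messages between correct processes are eventually delivered over the reliable links, every correct process $p_i$ eventually receives a valid $\langle\textsc{first},\cdot\rangle$ message from each correct sender, and thus eventually accumulates $|\emph{first-set}|\ge n-f$. At that point the condition on line~\ref{l.wait_first} fires and $p_i$ sends its $\langle\textsc{second},v_i\rangle$ message on line~\ref{l.send_second}.

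Applying the same argument to the second phase, since every correct process eventually reaches line~\ref{l.send_second} and broadcasts a valid $\langle\textsc{second},\cdot\rangle$ message, every correct $p_i$ eventually receives at least $n-f$ such messages and so $|\emph{second-set}|\ge n-f$ holds after finite time. The \textbf{when} condition for the second phase then triggers, and $p_i$ returns $\emph{LSB}(v_i)$.

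The only subtlety to mention is that waiting for exactly $n-f$ messages in each phase is safe because there are at most $f$ Byzantine processes, so the set of correct processes alone suffices to satisfy the threshold; no correct process can block waiting for a Byzantine contribution. Beyond this observation, the argument is a routine induction on phase number and involves no probabilistic reasoning, so I do not anticipate any real obstacle.
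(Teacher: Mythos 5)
Your argument is correct and follows the same two-phase propagation reasoning as the paper: all correct processes send \textsc{first} messages, so each eventually collects $n-f$ of them and sends \textsc{second}, and the same threshold argument then guarantees the return. The paper's proof is essentially identical, just more terse.
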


\begin{proof}
All correct processes send their messages in the first phase. As up to $f$ processes may be faulty, each correct process eventually receives $n-f$ $\lr{\textsc{first},x}$ messages and sends a message in the second phase. As $n-f$ correct processes send their messages, each correct process eventually receives $n-f$ $\lr{\textsc{second},x}$ messages and returns.
\end{proof}

From Lemma \ref{safety} and Lemma \ref{shared_coin_termination} we conclude:
\begin{theorem}
 Algorithm \ref{alg:shared_coin_protocol} implements a shared coin with success rate at least $\frac{18\epsilon^2+24\epsilon-1}{6(1+6\epsilon)}$.
\end{theorem}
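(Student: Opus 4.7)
The plan is to observe that this theorem is essentially a corollary bundling together the preceding lemmas in Section~\ref{sec:shared_coin}, matching the two requirements of the shared-coin definition. Specifically, the definition demands (i) that the procedure actually produces outputs at all correct processes, and (ii) that for each $b \in \{0,1\}$, all correct processes output $b$ with probability at least $\rho$. I would simply verify each clause in turn.

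For the liveness clause, I would invoke Lemma~\ref{shared_coin_termination} verbatim: since every correct process sends in the first phase and $n-f$ correct processes send in the second, the two $n-f$-sized wait conditions are eventually met at every correct process, and the call to \emph{shared\textunderscore coin}$(r)$ returns with some bit in $\{0,1\}$.

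For the success-rate clause, I would invoke Lemma~\ref{safety} once for each value of $b$. The lemma bounds the probability that all correct processes output a \emph{given} bit $b$; applying it symmetrically for $b=0$ and $b=1$ yields the claimed $\rho = \frac{18\epsilon^2+24\epsilon-1}{6(1+6\epsilon)}$ on each side, as required by the definition of a shared coin. Combining the two clauses gives the theorem.

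There is no real obstacle at this step: all technical work has been done in Lemmas~\ref{common_vals}, \ref{common_prob}, \ref{common_global_min}, \ref{safety}, and \ref{shared_coin_termination}. The only thing worth double-checking is that the lower bound on $\rho$ is indeed positive under the standing assumption $\epsilon > 0.109$, which the authors already note; one can confirm that the numerator $18\epsilon^2 + 24\epsilon - 1$ has its positive root below $0.041$, so for any $\epsilon > 0.109$ the success rate is a positive constant independent of $n$, as a shared coin should be.
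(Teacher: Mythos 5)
Your proposal is correct and matches the paper exactly: the paper's own derivation of this theorem is simply the one-line observation ``From Lemma~\ref{safety} and Lemma~\ref{shared_coin_termination} we conclude,'' which is precisely your combination of the liveness lemma and the success-rate lemma. Your added check that the numerator $18\epsilon^2+24\epsilon-1$ is positive for $\epsilon>0.109$ (positive root near $0.04$) is a harmless and accurate sanity check consistent with the authors' own remark.
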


{\bf Complexity.} 
In each shared coin instance all correct processes send two messages to all other processes. Each of these messages contains one VRF output (including a value and a proof), in addition to a constant number of bits that identify the message's type. Therefore, each message's size is a constant number of words and the total word complexity of a shared coin instance is $O(n^2)$.

We have presented a new shared coin in the asynchronous model that uses a VRF. This coin can be incorporated into the Byzantine Agreement algorithm of Mostefaoui et al.~\cite{mostefaoui2015signature}, to yield an asynchronous binary Byzantine Agreement with resilience $f=(\frac{1}{3}-\epsilon)n$, a word complexity of $O(n^2)$, and $O(1)$ expected time.


\section{Committees}
\label{sec:committees}

\subsection{Validated committee sampling}

With the aim of reducing the number of messages and achieving sub-quadratic word complexity, it is common to avoid all-to-all communication phases \cite{Algorand,king2011breaking}. Instead, a  subset of processes is sampled to a committee and only processes elected to the committee send messages. As committees are randomly sampled, preventing the adversary from corrupting their members, each committee member cannot predict the next committee sample and send its message to all other processes. Potentially, if the committee is sufficiently small, this technique allow committee-based protocols to result in sub-quadratic word complexity.

Using VRFs, it is possible to implement \emph{validated committee sampling}, which is a primitive that allows processes to elect committees without communication and later prove
their election. It provides every process $p_i$ with a private
function $\emph{sample}_i(s,\lambda)$, which gets a string $s$ and a
threshold $1\leq \lambda \leq n$ and returns a tuple $\lr{v_i,\sigma_i}$, where
$v_i \in \{ \emph{true},\emph{false}\}$ and $\sigma_i$ is a proof that $v_i=\emph{sample}_i(s,\lambda)$. If $v_i=\emph{true}$ we say that $p_i$ is \emph{sampled} to the committee for $s$ and $\lambda$. The primitive ensures that $p_i$ is sampled with probability $\frac{\lambda}{n}$.
In addition, there is a public (known to all) function,
$\emph{committee-val}(s,\lambda,i,\sigma_i)$, which gets a
string $s$, a threshold $\lambda$, a process identification $i$ and a proof
$\sigma_i$, and returns \emph{true} or \emph{false}.

Consider a string $s$.
For every $i$, $1 \leq i \leq n$, let $\lr{v_i,\sigma_i}$ be the return value of $\emph{sample}_i(s,\lambda)$. The following is satisfied for every $p_i$:

\begin{itemize}
  
  \item $\emph{committee-val}(s,\lambda,i,\sigma_i) = v_i$.
  
  \item If $p_i$ is correct, then it is infeasible
  for the adversary to compute $\emph{sample}_i(s,\lambda)$.
  
  \item It is infeasible for the adversary to find  $\lr{v, \sigma}$
  s.t.\ $v \neq v_i$ and
  $\emph{committee-val}(s,\lambda,i,\sigma) = true$.
  
\end{itemize}


We refer to the set of processes sampled to the committee for $s$ and $\lambda$ as $C(s,\lambda)$. In this paper we set $\lambda$ to $8\ln n$.
Let $d$ be a parameter of the system such that $\max \{\frac{1}{\lambda},0.0362\}<d < \frac{\epsilon}{3}-\frac{1}{3\lambda}$.
Our committee-based protocols can no longer wait for $n-f$ processes. Instead, they wait for $W\triangleq \left \lceil{(\frac{2}{3}+3d)\lambda}\right \rceil$ processes. We show that whp at least $W$ processes will be correct in each committee sample and hence waiting for this number does not compromise liveness. In addition, instead of assuming $f$ Byzantine processes, our committee-based protocols assume that whp the number of Byzantine processes in each committee is at most $B\triangleq \left \lfloor{(\frac{1}{3}-d)\lambda}\right \rfloor$.
The following claim is proven in Appendix \ref{chernoof_appendix} using Chernoff bounds.

\begin{restatable}{claim}{resampling}
\label{sampling}
For a string $s$ and $\lambda=const\cdot \ln n$ the following hold with high probability:

\begin{description}
  \item[(S1)] $|C(s,\lambda)|\leq (1+d)\lambda$.
  \item[(S2)] $|C(s,\lambda)|\geq (1-d)\lambda$.
  \item[(S3)] At least
  $W$ processes in $C(s,\lambda)$ are correct.
    \item[(S4)] At most $B$ processes in $C(s,\lambda)$ are Byzantine.
\end{description}

\end{restatable}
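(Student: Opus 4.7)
\textbf{Proof proposal for Claim \ref{sampling}.}

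The plan is to reduce the statement to a standard application of the multiplicative Chernoff bound. By the pseudorandomness and uniqueness of the VRF, for a fixed string $s$ and threshold $\lambda$ the bit $v_i = \mathit{sample}_i(s,\lambda)$ is (computationally indistinguishable from) an independent Bernoulli trial with success probability $\lambda/n$, for each $p_i$. I will therefore model $X \triangleq |C(s,\lambda)|$, the number of correct processes $X_c$, and the number of Byzantine processes $X_b$ in the committee as sums of independent Bernoullis. Their expectations are $\mathbb{E}[X] = \lambda$, $\mathbb{E}[X_c] \geq (\tfrac{2}{3}+\epsilon)\lambda$, and $\mathbb{E}[X_b] \leq (\tfrac{1}{3}-\epsilon)\lambda$, using $f \leq (\tfrac{1}{3}-\epsilon)n$.

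First I would dispatch (S1) and (S2) by applying Chernoff to $X$ with deviation parameter $d$ around the mean $\lambda$, each giving failure probability at most $\exp(-\Theta(d^2 \lambda))$. For (S4) I would write the target threshold as $(\tfrac{1}{3}-d)\lambda = (1+\delta_b)\mathbb{E}[X_b]$ with $\delta_b \geq (\epsilon-d)/(\tfrac{1}{3}-\epsilon)$; the hypothesis $d < \tfrac{\epsilon}{3} - \tfrac{1}{3\lambda} < \epsilon$ makes $\delta_b$ a positive constant. For (S3), writing $W = \lceil(\tfrac{2}{3}+3d)\lambda\rceil$ as $(1-\delta_c)\mathbb{E}[X_c]$, the additive gap satisfies $\mathbb{E}[X_c] - W \geq (\epsilon-3d)\lambda - 1$, which is strictly positive precisely because $d < \tfrac{\epsilon}{3} - \tfrac{1}{3\lambda}$; here the $\tfrac{1}{3\lambda}$ slack absorbs the ceiling. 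Since $\epsilon$ and $d$ are system constants, $\delta_c$ is bounded below by a positive constant, and Chernoff again yields an exponentially small bound.

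Finally, plugging in $\lambda = 8\ln n$, each of the four Chernoff exponents is of the form $c \ln n$ for some positive constant $c$ depending on $\epsilon$ and $d$ (and on $0.0362 < d$, which ensures $d$ itself is bounded away from $0$). Hence each event fails with probability at most $n^{-\Omega(1)}$, and a union bound yields all four events simultaneously with high probability.

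The main obstacle is the bookkeeping around (S3): the relative gap $\delta_c$ shrinks as $d$ approaches $\epsilon/3$, so I need to verify that the admissible range for $d$ (together with the constraint $\epsilon > 0.109 + \tfrac{1}{8\ln n}$) guarantees $\epsilon - 3d$ remains a positive constant, and that the $-1$ from the ceiling of $W$ is dominated by the $\Theta(\lambda)$ slack. The other parts are routine applications of Chernoff with constant multiplicative deviations.
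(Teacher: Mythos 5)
Your proposal is correct and follows essentially the same route as the paper's proof in Appendix A: model committee membership as independent Bernoulli trials with parameter $\lambda/n$, apply the multiplicative Chernoff bounds to $|C(s,\lambda)|$, the correct members, and the Byzantine members with deviations $d$, $\delta_c$, and $(\epsilon-d)/(\tfrac{1}{3}-\epsilon)$ respectively, and conclude with failure probability $n^{-\Omega(1)}$ since $\lambda=8\ln n$. Your device of using the $\tfrac{1}{3\lambda}$ slack in the bound on $d$ to absorb the ceiling in $W$ is exactly the paper's substitution $d'=3d+\tfrac{1}{\lambda}$, and the issue you flag about $\delta_c$ degenerating as $d$ approaches $\tfrac{\epsilon}{3}$ is present (and left implicit) in the paper's proof as well.
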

If a protocol uses a constant number of committees, then with high probability, Claim \ref{sampling} holds for all of them.
If, however, a protocol uses a polynomial number of committees then it does not guarantee the properties of this claim.
The following corollaries are derived from Claim \ref{sampling} and are used to ensure the safety and liveness properties of our protocols that use committees (a full proof is in Appendix \ref{chernoof_appendix}). Intuitively, S3 allows the protocol to wait for $W$ messages without forgoing liveness. Property S5 below shows that if two processes wait for sets $P_1$ and $P_2$ of this size, then they hear from at least $B+1$ common processes of which, by S4, at least one is correct.

\begin{restatable}[S5]{corollary}{WW}
\label{S5}
Consider $C(s,\lambda)$ for some string $s$ and some $\lambda=const\cdot \ln n$ and two sets $P_1,P_2\subset C(s,\lambda)$ s.t $|P_1|=|P_2|=W$. Then, $|P_1\cap P_2|\geq B +1$.
 
\end{restatable}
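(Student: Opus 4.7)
The plan is to prove Corollary S5 by a straightforward inclusion-exclusion argument, using the high-probability bound on the committee size given by property S1 of Claim \ref{sampling}.

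First I would observe that, since $P_1, P_2 \subseteq C(s, \lambda)$, we have $|P_1 \cup P_2| \leq |C(s,\lambda)|$. Combined with property S1, this gives $|P_1 \cup P_2| \leq (1+d)\lambda$ whp. By inclusion-exclusion,
\begin{equation*}
|P_1 \cap P_2| \;=\; |P_1| + |P_2| - |P_1 \cup P_2| \;\geq\; 2W - (1+d)\lambda.
\end{equation*}

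Next I would lower-bound $2W - (1+d)\lambda$ and compare it to $B+1$. Using $W \geq \bigl(\tfrac{2}{3}+3d\bigr)\lambda$ and $B \leq \bigl(\tfrac{1}{3}-d\bigr)\lambda$, it suffices to check
\begin{equation*}
2\bigl(\tfrac{2}{3}+3d\bigr)\lambda - (1+d)\lambda \;=\; \bigl(\tfrac{1}{3}+5d\bigr)\lambda \;\geq\; \bigl(\tfrac{1}{3}-d\bigr)\lambda + 1,
\end{equation*}
which reduces to $6d\lambda \geq 1$. This is where the assumption $d > \tfrac{1}{\lambda}$ (from the definition of $d$ in the model) comes in: it yields $6d\lambda > 6 > 1$, closing the inequality.

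Chaining the two displays gives $|P_1 \cap P_2| \geq B+1$, as required. The one subtlety worth being careful about is the bookkeeping with the ceiling in the definition of $W$ and the floor in the definition of $B$; I would absorb these into the direct inequalities $W \geq (\tfrac{2}{3}+3d)\lambda$ and $B+1 \leq (\tfrac{1}{3}-d)\lambda + 1$ at the start, so that the ceilings and floors never appear explicitly in the computation. I do not foresee a genuine obstacle here — the only nontrivial input is S1, which is already given whp by Claim \ref{sampling}; everything else is arithmetic with the specified constants.
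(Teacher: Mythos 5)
Your proposal is correct and is essentially the paper's own argument: the paper's phrasing ``$P_2$ contains at most $|C(s,\lambda)\setminus P_1|\leq(1+d)\lambda-W$ processes outside $P_1$'' is just your inclusion-exclusion bound $|P_1\cap P_2|\geq 2W-(1+d)\lambda$, and both proofs then reduce to $6d\lambda>1$ via $d>\tfrac{1}{\lambda}$, handling the ceiling on $W$ and the floor on $B$ in the same (correct) directions.
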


The following property is used to show that if $B+1$ correct processes hold some value, and some correct process waits for messages from $W$ processes, then it hears from at least one correct process that holds this value.

\begin{restatable}[S6]{corollary}{WB}
\label{S6}
Consider $C(s,\lambda)$ for some string $s$ and some $\lambda=const\cdot \ln n$ and two sets $P_1,P_2\subset C(s,\lambda)$ s.t $|P_1|=B+1$ and $|P_2|=W$. Then, $|P_1\cap P_2|\geq 1$.
\end{restatable}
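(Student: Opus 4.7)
The plan is to reduce the statement to a straightforward inclusion--exclusion argument using the size upper bound on $C(s,\lambda)$ guaranteed by S1.

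First I would appeal to property (S1) of Claim~\ref{sampling}, which states that with high probability $|C(s,\lambda)| \leq (1+d)\lambda$. Since $P_1, P_2 \subset C(s,\lambda)$, we have $|P_1 \cup P_2| \leq (1+d)\lambda$. Next I would lower-bound $|P_1| + |P_2|$ using the definitions $B = \lfloor(\tfrac{1}{3}-d)\lambda\rfloor$ and $W = \lceil(\tfrac{2}{3}+3d)\lambda\rceil$, which give
\[
|P_1| + |P_2| \;=\; B+1+W \;\geq\; \bigl(\tfrac{1}{3}-d\bigr)\lambda + \bigl(\tfrac{2}{3}+3d\bigr)\lambda \;=\; (1+2d)\lambda.
\]
Then by inclusion--exclusion,
\[
|P_1 \cap P_2| \;=\; |P_1| + |P_2| - |P_1 \cup P_2| \;\geq\; (1+2d)\lambda - (1+d)\lambda \;=\; d\lambda.
\]
Finally, since the system parameter $d$ was chosen to satisfy $d > 1/\lambda$, we have $d\lambda > 1$, and because $|P_1 \cap P_2|$ is a non-negative integer this immediately yields $|P_1 \cap P_2| \geq 1$, as required.

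There is no real obstacle here beyond bookkeeping: the content of the lemma is entirely absorbed into the choice of the constants $B$, $W$, and $d$, which were engineered so that $B + 1 + W$ strictly exceeds the (probabilistic) upper bound $(1+d)\lambda$ on the committee size. The only mild subtlety is to verify that the floor and ceiling in the definitions of $B$ and $W$ do not erode the margin $d\lambda$, but this is immediate since $\lceil x \rceil \geq x$ and $\lfloor x \rfloor + 1 \geq x$. As with Corollary~\ref{S5}, the conclusion is contingent on (S1) holding, which occurs with high probability for any single committee and, by a union bound, for all committees of any protocol that samples only constantly many of them.
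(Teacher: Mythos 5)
Your proof is correct and is essentially the paper's own argument in a slightly different guise: the paper bounds $|P_1\cap P_2|\geq |P_2|-|C(s,\lambda)\setminus P_1|$, which is exactly your inclusion--exclusion step, and both arguments then reduce to the same computation $(B+1)+W-(1+d)\lambda\geq d\lambda>1$ using S1 and $d>1/\lambda$. The handling of the floor and ceiling matches the paper's as well, so there is nothing to add.
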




\subsection{WHP Coin}
\label{sec:shared_coin_comm}

We now employ committee sampling to reduce the word complexity of our shared coin. Our new protocol is called whp\textunderscore coin. As before, we assume that for every $r\in \mathbb{N}$, the invocation of whp\textunderscore coin$(r)$ by some process $p$ is causally independent of its progress at other processes. We now define the \emph{WHP coin} abstraction:

\begin{definition}[WHP Coin]
\label{def:whp_coin}
A WHP coin with success rate $\rho$ is a shared object exposing whp\textunderscore coin$(r)$, $r\in \mathbb{N}$ at each process.
If all correct processes invoke whp\textunderscore coin$(r)$ then, whp (1) all correct processes return, and (2) all of them output the same value $b$ with probability at least $\rho$, for any value $b\in\{0,1\}$.
\end{definition}

The whp\textunderscore coin protocol is presented in Algorithm \ref{alg:shared_coin_comm_protocol}. It samples two committees, one for each communication step. In each step, only the processes that are sampled to the committee send messages. However, since the committee samples are unpredictable, messages are sent to all processes. With committees, processes can no longer wait for $n-f$ messages.
Instead they wait for $W$ messages.
Since a constant number of committees is sampled in the protocol, Claim \ref{sampling} holds for all of them and by S3, all processes receive $W$ messages, ensuring liveness.

\begin{algorithm}
\caption{Protocol whp\textunderscore coin($r$): code for process $p_i$}

\begin{algorithmic}[1]

\State Initially \emph{first-set, second-set} $=\emptyset$, $v_i=\infty$

\If{$\emph{sample}_i(\textsc{first},\lambda) = $ \emph{true}} \label{l.com.sample1}
    \State
    $v_i \gets \emph{VRF}_i(r)$
    \State send $\lr{\textsc{first},v_i}$
    to all processes 
\EndIf
\Statex

\Receiving{ $\lr{\textsc{first},v_j}$ with valid $v_j$

from validly sampled $p_j$ }
    \If{$\emph{sample}_i(\textsc{second},\lambda)$} \label{l.com.sample2}

            
         	\If{$v_j < v_i$} $v_i \gets v_j$
	        \EndIf

        	\State \emph{first-set} $\gets$ \emph{first-set} $\cup \{j\}$  
        	\State \textbf{when} |\emph{first-set}| $=W$ for the first time \label{l.com.wait_first}
	    	\State \hspace{\algorithmicindent} send $\lr{\textsc{second},v_i)}$  to all processes \label{l.com.send_second}

    \EndIf
\EndReceiving

\Statex

\Receiving{ $\lr{\textsc{second},v_j}$ with valid $v_j$

from validly sampled $p_j$}
 	\If{$v_j < v_i$} $v_i \gets v_j$
    \EndIf

	\State \emph{second-set} $\gets$ \emph{second-set} $\cup \{j\}$

	\State \textbf{when} $|\emph{second-set}|=W$ for the first time
    \State \hspace{\algorithmicindent} \textbf{return} \emph{LSB}$(v_i)$
\EndReceiving

\end{algorithmic}

\label{alg:shared_coin_comm_protocol}
\end{algorithm}

In Appendix \ref{whp_coin_appendix} we adapt the coin’s correctness proof given in Section \ref{sec:shared_coin} to the committee-based protocol, proving the following theorem:

\begin{restatable}{theorem}{rewhpcoin}
 Algorithm \ref{alg:shared_coin_comm_protocol} implements a WHP coin with a constant success rate.
\end{restatable}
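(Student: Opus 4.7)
The plan is to mirror the proof from Section \ref{sec:shared_coin}, replacing $n-f$ by $W$ and $f$ by $B$, and relying on the committee sampling guarantees in place of deterministic set sizes. Since Algorithm \ref{alg:shared_coin_comm_protocol} samples only two committees, $C_1 = C(\textsc{first},\lambda)$ and $C_2 = C(\textsc{second},\lambda)$, Claim \ref{sampling} gives S1--S4 for both simultaneously with high probability, and the rest of the analysis conditions on this event.

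For termination (analog of Lemma \ref{shared_coin_termination}), S3 applied to $C_1$ guarantees that at least $W$ correct first-committee members send valid \textsc{first} messages, so every correct member of $C_2$ eventually collects $W$ of them and reaches line \ref{l.com.send_second}; S3 applied to $C_2$ then guarantees that every correct process collects $W$ valid \textsc{second} messages and returns.

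For the success rate, I would replay the three-lemma structure of Section \ref{sec:shared_coin}. Call a value $v$ \emph{common} if at least $B+1$ correct members of $C_2$ hold $v$ just after line \ref{l.com.send_second}. The counting step mirrors Lemma \ref{common_vals}: form a table whose rows are correct $C_2$-members and whose columns are $C_1$-members, with $T[i,j]=1$ iff $p_i$ received $\lr{\textsc{first},v_j}$ from $p_j$ before sending its \textsc{second} message; each correct row contains exactly $W$ ones. Using $|C_2^{\mathrm{correct}}|\geq W$ from S3 and $|C_1|\leq(1+d)\lambda$ from S1, the same double counting as in Lemma \ref{common_vals} yields $c \geq (W^2 - (1+d)\lambda B)/(W-B)$. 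For the analog of Lemma \ref{common_prob}, let $v_{\min}=\min_{p_i\in C_1}\VRF_i(r)$. Causal independence of invocations together with the delayed-adaptive adversary means that the adversary schedules the \textsc{first} messages and makes its corruption decisions without knowing the pseudorandom VRF outputs, so the symmetry argument from Lemma \ref{common_prob}---with $C_1$ and $B$ in place of $\Pi$ and $f$---gives that $v_{\min}$ is common with probability at least $(c-B)/|C_1|$. The analog of Lemma \ref{common_global_min} is immediate from Corollary S6: if $v_{\min}$ is common, then $B+1$ correct $C_2$-members broadcast $v_{\min}$ in their \textsc{second} messages, and since every correct process waits for $W$ such messages all originating in $C_2$, S6 forces an intersection of size at least $1$, so every correct process holds $v_{\min}$ at return. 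Halving for the LSB gives a success rate of at least $(c-B)/(2|C_1|)$ per bit.

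The main obstacle is the parameter bookkeeping: I need to verify that the stipulated ranges $\max\{3/(8\ln n),0.109\}+1/(8\ln n)<\epsilon<1/3$ and $\max\{1/\lambda,0.0362\}<d<\epsilon/3-1/(3\lambda)$ jointly guarantee $c-B=\Omega(\lambda)$, so that $(c-B)/|C_1|$ is bounded below by a positive constant independent of $n$. Substituting $W=\lceil(2/3+3d)\lambda\rceil$ and $B=\lfloor(1/3-d)\lambda\rfloor$ and clearing denominators reduces this to showing $W^2+B^2>B(W+(1+d)\lambda)$, which simplifies (up to the $O(1)$ rounding losses absorbed by the lower bound on $d$) to $(11d/3+14d^2)\lambda^2>0$, leaving a $\Theta(\lambda)$ margin. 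All told, the coin returns whp and achieves a constant success rate per bit, as required.
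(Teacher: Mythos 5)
Your proposal is correct and follows essentially the same route as the paper's proof in Appendix \ref{whp_coin_appendix}: termination from S3, a double-counting table bounding the number $c$ of common values, the delayed-adversary symmetry argument showing $v_{\min}$ is common with probability at least $(c-B)/|C_1|$ (algebraically identical to the paper's $(1-\frac{B}{|C_1|})\cdot\frac{c-B}{|C_1|-B}$), propagation via S6, and halving for the LSB. The only cosmetic difference is that your table restricts to the correct rows of $C_2$ with a per-column threshold of $B+1$, whereas the paper pads with faulty rows and uses threshold $2B+1$ together with S4; both yield $c-B=\Omega(d\lambda)$ and hence a constant success rate.
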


{\bf Complexity.} 
In each whp\textunderscore coin instance using committees all correct processes that are sampled to the two committees (lines \ref{l.com.sample1},\ref{l.com.sample2}) send messages to all other processes. Each of these messages contains a VRF output (including a value and a proof), a VRF proof of the sender's election to the committe and a constant number of bits that identify the type of message that is sent. Therefore, each message's size is a constant number of words and the total word complexity of a WHP coin instance is $O(nC)$ where $C$ is the number of processes that are sampled to the committees. Since each process is sampled to a committee with probability $\frac{1}{\lambda}$, we get a word complexity of $O(n\lambda)=O(n\log n)=\widetilde{O}(n)$ in expectation.

\section{Asynchronous sub-quadratic Byzantine Agreement}
\label{sec:consensus}

We adapt the Byzantine Agreement algorithm of Mostefaoui et al.~\cite{mostefaoui2015signature} to work with committees. Our protocol leverages 
an \emph{approver} abstraction, which we implement in Section \ref{sec:approver} and then integrate it into a Byzantine Agreement protocol in Section \ref{sec:consensus_protocol}. 


\subsection{Approver abstraction}
\label{sec:approver}

The \emph{approver} abstraction is an adaptation of the Synchronized Binary-Value Broadcast (SBV-broadcast) primitive in \cite{mostefaoui2015signature}. It provides processes with the procedure \emph{approve(v)}, which takes a value $v$ as an input and returns a set of values.

\begin{assumption}
\label{two_values}
Correct processes invoke the approver with at most $2$ different values.
\end{assumption}

Under this assumption, an approver satisfies the following:

\begin{definition}[Approver]
\label{def:approver}
In an approver instance the following properties hold whp:
\begin{itemize}
  \item Validity. If all correct processes invoke approve(v) then the only possible return value of correct processes is $\{v\}$.
  
  \item Graded Agreement. If a correct process $p_i$ returns $\{v\}$ and another correct process $p_j$ returns $\{w\}$ then $v=w$.
  
  \item Termination. If all correct processes invoke approve then approve returns with a non-empty set at all of them.
\end{itemize}
\end{definition}

Our approver uses different committees for different message types, as illustrated in Fig. \ref{fig:comm_in_app}. Importantly, the protocol satisfies the so-called \emph{process replaceability}~\cite{Algorand} property, whereby a correct process selected for a committee $C$ broadcasts at most one message in its role as a member of $C$. Thus, our delayed-adaptive adversary can learn of a process’s membership in a committee only after that process ceases to partake in the committee. This allows us to leverage the sampling analysis in the previous section. For clarity of the presentation, we discuss the algorithm here under the assumption that properties S1-S6 hold for all sampled committees. As shown above, these hold whp for each committee, and the algorithm employs a constant number of committees, so they hold for all of them whp.

 \begin{figure}[htp]
\centering
\includegraphics[width=7cm]{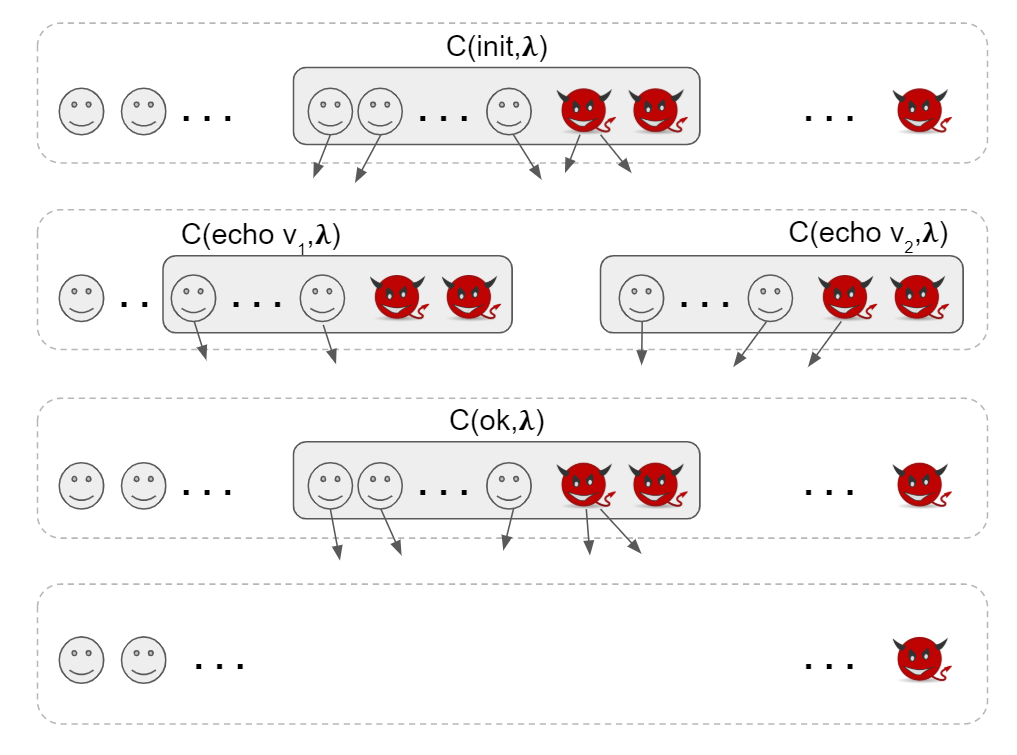}
\caption{Committees sampled in Algorithm \ref{alg:ratifier_comm}.}
\label{fig:comm_in_app}
\end{figure} 

The approver’s pseudo-code appears in Algorithm \ref{alg:ratifier_comm}.
It consists of three phases – init, echo, and ok. In each phase, committee members broadcast to all processes.
Messages are validated to originate from legitimate committee members using the  $\emph{committee-val}$ primitive; this validation is omitted from the pseudo-code for clarity.
In the first phase, each init committee member broadcasts its input value to all processes.

The role of the echo phase is to ``boost’’ values sent by sufficiently many processes in the init phase, and make sure that \emph{all} correct processes receive them. “Sufficiently many” here means at least  $B+1$, which by S4 includes at least one correct process.
Ensuring process replaceability in the echo phase is a bit tricky, since committee members must echo every value they receive from least  $B+1$ processes, and there might be two such values.
(Recall that we assume that correct processes invoke the protocol with at most two different values, so there cannot be more than two values that exceed this threshold).
To ensure that each committee member broadcasts at most once, we sample a different committee for each value. That is, the value $v$ is part of the string passed to the sample function for this phase.

When a member of the ok committee receives $\lr{\textsc{echo},v}$ messages from $W$ different members of the same echo committee for the first time, it broadcasts an $\lr{\textsc{ok},v}$ message. Note that the process sends an ok message only for the first value that exceeds this threshold. An $\lr{\textsc{ok},v}$ message includes, as proof of its validity, $W$ signed $\lr{\textsc{echo},v}$ messages. Again, the proof and its validation are omitted from the pseudo-code for clarity.
Once a correct process receives $W$ valid \textsc{ok} messages, it returns the set of values in these messages.

\begin{algorithm}
\caption{Protocol approve($v_i$): code for process $p_i$}


\begin{algorithmic}[1]

\If{$\emph{sample}_i(\textsc{init},\lambda) = $ \emph{true}}
	 broadcast $\lr{\textsc{init},v_i}$ \label{l.app.comm1}
\EndIf
\Statex

\Receiving{$\lr{\textsc{init},v}$ from $B+1$ different processes} 
    
\If{$\emph{sample}_i(\lr{\textsc{echo},v},\lambda) = $ \emph{true}}  broadcast $\lr{\textsc{echo},v}$ \label{l.app.comm2_and_echo}
\EndIf
\EndReceiving
\Statex

\Receiving{$\lr{\textsc{echo},v}$ from $W$ different processes}
\If{$\emph{sample}_i(\textsc{ok},\lambda) = $ \emph{true} $\land$ haven't sent any $\lr{\textsc{ok},*}$ message} \label{l.app.comm3}
    \State broadcast $\lr{\textsc{ok},v}$
\EndIf
\EndReceiving
\Statex

\Receiving{$\lr{\textsc{ok},*}$ from $W$ different processes}

    \State return the set of values received in these messages
\EndReceiving

\end{algorithmic}

\label{alg:ratifier_comm}
\end{algorithm}

We next prove that Algorithm \ref{alg:ratifier_comm} implements an approver. The following three lemmas are stated here and their proofs appear in Appendix \ref{approver_appendix}:

\begin{restatable}[Validity]{lemma}{approvervalidity}
\label{rat_validity_comm}
If all correct processes invoke \emph{approve(v)} then the only possible return value of correct processes is $\{v\}$ whp.
\end{restatable}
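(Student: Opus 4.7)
The plan is to trace what happens to values $v' \neq v$ through the three phases and show that no such $v'$ can survive to the return step, so the returned set is forced to be $\{v\}$. Throughout the argument I condition on the event that S1--S6 hold for all sampled committees; since the algorithm uses only a constant number of committees, this event occurs whp by a union bound over Claim~\ref{sampling}.

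First, I would argue about the init phase. Since every correct process invokes approve$(v)$, every correct member of the init committee broadcasts only $\lr{\textsc{init},v}$. By S4 applied to the init committee, at most $B$ of its members are Byzantine. Therefore no correct process ever receives $\lr{\textsc{init},v'}$ from $B+1$ different processes for any $v' \neq v$, so no correct process broadcasts $\lr{\textsc{echo},v'}$ in the echo phase.

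Next, I would rule out valid $\lr{\textsc{echo},v'}$ and $\lr{\textsc{ok},v'}$ messages for $v' \neq v$. A valid $\lr{\textsc{echo},v'}$ message can only come from the echo committee for $v'$; by the previous step no correct member of this committee produces one, and by S4 this committee contains at most $B$ Byzantine members. Since $W > B$ by our choice of $d$ and $\lambda$, at most $B < W$ distinct valid $\lr{\textsc{echo},v'}$ messages can ever exist. Consequently, no correct member of the ok committee collects $W$ such echoes and broadcasts $\lr{\textsc{ok},v'}$; moreover, since a valid $\lr{\textsc{ok},v'}$ message carries a proof consisting of $W$ distinct signed echoes for $v'$, no valid $\lr{\textsc{ok},v'}$ message can be forged by the adversary either.

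Finally, I would invoke S3 for termination and combine the pieces. By S3 there are at least $W$ correct members in each of the init, echo (for $v$), and ok committees, and they all broadcast for $v$, so every correct process eventually receives $W$ valid $\lr{\textsc{ok},v}$ messages and returns. By the previous paragraph any $\lr{\textsc{ok},v'}$ with $v' \neq v$ is invalid, so the returned set is exactly $\{v\}$.

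The main obstacle is bookkeeping around which committees must satisfy S1--S6. A naive reading of the echo phase suggests a separate committee per value, which could be unboundedly many if the adversary injects arbitrary $v'$. The argument above sidesteps this: only the init committee, the ok committee, the echo committee for $v$, and the echo committee for the (at most one, by Assumption~\ref{two_values} restricted to the validity hypothesis) ``off'' value $v'$ matter, because any other $v'$ cannot even give rise to a single correct echo. This keeps the number of relevant committees constant and preserves the whp guarantee.
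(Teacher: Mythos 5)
Your argument is correct and follows essentially the same route as the paper's proof: S4 on the init committee blocks any correct echo of $w\neq v$, S4 on $C(\lr{\textsc{echo},w},\lambda)$ caps valid $\lr{\textsc{echo},w}$ messages at $B<W$ so no correct (or, via the proof-carrying ok messages, Byzantine) process can produce a valid $\lr{\textsc{ok},w}$, hence only $v$ can appear in the returned set. Your final paragraph on S3 and actually returning is termination, which the paper handles in a separate lemma, and your closing remark about which echo committees need the sampling guarantees is a reasonable (and slightly more explicit) gloss on the paper's own appeal to "the four sampled committees."
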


\begin{restatable}[Graded Agreement]{lemma}{approveragreement}
\label{rat_graded_agreement_comm}
If a correct process $p_i$ returns $\{v\}$ and another correct process $p_j$ returns $\{w\}$ then $v=w$ whp.
\end{restatable}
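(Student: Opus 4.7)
The plan is to chase the provenance of the two returned singletons back to a single correct ok-committee member who must have broadcast both values, and then use the code's ``send ok at most once'' guard to derive $v=w$.

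First, I would unpack the return condition. A correct process $p_i$ returns $\{v\}$ only if every one of the $W$ valid $\lr{\textsc{ok}, \cdot}$ messages it accepts carries the value $v$; similarly $p_j$ receives $W$ valid $\lr{\textsc{ok}, w}$ messages. Let $O_i$ and $O_j$ be the corresponding sets of senders. By the $\emph{committee-val}$ check and the uniqueness guarantee of the sampling primitive, $O_i, O_j \subseteq C(\textsc{ok}, \lambda)$ with $|O_i| = |O_j| = W$.

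Next, I would invoke the sampling corollaries. Since the approver uses only a constant number of committees, we may assume (whp) that properties S1--S6 hold for $C(\textsc{ok}, \lambda)$. Apply Corollary S5 to $O_i$ and $O_j$ to obtain $|O_i \cap O_j| \geq B+1$. Then apply S4, which bounds the number of Byzantine members of $C(\textsc{ok}, \lambda)$ by $B$, to conclude that $O_i \cap O_j$ contains at least one correct process $p_k$.

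Finally I would exploit process replaceability. Because $p_k \in O_i$, it sent a $\lr{\textsc{ok}, v}$ message; because $p_k \in O_j$, it sent a $\lr{\textsc{ok}, w}$ message. But the guard in the ok-phase code (``haven't sent any $\lr{\textsc{ok},*}$ message'') ensures that any correct process broadcasts at most one ok message in its lifetime. Hence $v = w$, as required. The main obstacle, and really the only subtle point, is justifying the whp conditioning: the argument is clean \emph{conditional on} S1--S6 holding for the ok committee, so I would state explicitly at the top that this conditioning costs only a $1 - o(1)$ factor because only a constant number of committees is sampled in the approver.
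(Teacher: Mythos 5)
Your proposal is correct and follows essentially the same route as the paper's own proof: apply S5 to the two $W$-sized sender sets to get an intersection of at least $B+1$ members of the ok committee, use S4 to find a correct process $p_k$ in the intersection, and conclude $v=w$ from the guard that a correct process sends at most one $\lr{\textsc{ok},*}$ message. Your explicit remarks about the whp conditioning and the $\emph{committee-val}$ check are fine elaborations of what the paper leaves implicit.
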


\begin{restatable}[Termination]{lemma}{approvertermination}

\label{rat_termination_comm}
If all correct processes invoke approve then at every correct process approve returns with a non-empty set whp.
\end{restatable}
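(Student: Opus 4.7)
The plan is to trace a value through the three phases of Algorithm~\ref{alg:ratifier_comm}, exhibiting at each phase a set of correct committee members whose broadcasts are enough to activate the next phase at every correct process. Throughout, I would assume that properties S1--S6 hold for all committees sampled in this instance. Since the protocol uses only a constant number of committees -- one init, one ok, and at most two echo committees (one per value, by Assumption~\ref{two_values}) -- this holds jointly whp by Claim~\ref{sampling}, so the ``whp'' qualifier in the statement comes for free from this conditioning.

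\textbf{The chain of three phases.} First, by S3 at least $W$ correct init committee members execute line~\ref{l.app.comm1} and broadcast. By Assumption~\ref{two_values} their inputs take at most two distinct values, so by pigeonhole there is a single value $u$ broadcast by at least $\lceil W/2\rceil$ correct init members. A short calculation from the definitions $W=\lceil(\tfrac{2}{3}+3d)\lambda\rceil$ and $B=\lfloor(\tfrac{1}{3}-d)\lambda\rfloor$, together with $d\geq 1/\lambda$, gives $\lceil W/2\rceil\geq B+1$. Reliable links then ensure that every correct process eventually sees $\langle\textsc{init},u\rangle$ from $B+1$ distinct senders, so every correct member of the echo committee for $u$ reaches the guard in line~\ref{l.app.comm2_and_echo} and broadcasts $\langle\textsc{echo},u\rangle$. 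By S3 applied to that committee, at least $W$ such echoes are sent and eventually delivered everywhere, so every correct member of the ok committee eventually crosses the $W$-echo threshold for some value (at worst $u$) and broadcasts a single valid $\langle\textsc{ok},*\rangle$. Finally, S3 for the ok committee yields at least $W$ correct ok broadcasters, so every correct process eventually collects $W$ valid ok messages from distinct senders and returns; since the returned set contains at least one value, it is non-empty.

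\textbf{Main obstacle.} The most delicate step is the echo-to-ok transition. Each correct ok committee member is permitted to broadcast only once, tied to the \emph{first} value whose echo count reaches $W$ in its local view. A priori this leaves room for an ok member to stall if it keeps ``waiting'' while neither candidate value accumulates enough echoes. The pigeonhole value $u$ resolves this: its echo committee is guaranteed by S3 to be saturated with at least $W$ correct broadcasters, so every correct ok member is eventually forced to see some value cross the threshold. Combined with the process-replaceability structure of the committees (each correct member broadcasts at most once per role and then exits, so the delayed-adaptive adversary cannot undo its participation in the committee after the fact), this is exactly what is needed to legitimately apply S3 to the ok committee and close the termination argument.
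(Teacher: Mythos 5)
Your proof is correct and follows essentially the same route as the paper's: S3 guarantees $W$ correct init members, pigeonhole over the (at most) two input values yields a value sent by more than $B$ correct processes (the paper phrases this as $\tfrac{1}{2}W > (\tfrac{1}{3}-d)\lambda \geq B$, equivalent to your $\lceil W/2\rceil \geq B+1$), and then S3 applied to the corresponding echo committee and to the ok committee carries the value through to $W$ ok messages at every correct process. Your discussion of the echo-to-ok transition matches the paper's handling of the "first value to cross the threshold" guard, so there is nothing to add.
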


From Lemmas \ref{rat_validity_comm},\ref{rat_graded_agreement_comm},\ref{rat_termination_comm}, we conclude the following theorem:

\begin{theorem}
Algorithm \ref{alg:ratifier_comm} implements an approver.
\end{theorem}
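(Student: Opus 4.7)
The theorem is a direct corollary of the three lemmas stated immediately before it, so my plan is to assemble the proof as a one-line union bound, and the real content is the plan for each individual lemma. Throughout, I would first note that Algorithm~\ref{alg:ratifier_comm} employs a constant number of committees (the init committee, the ok committee, and one echo committee per candidate value, of which there are at most two by Assumption~\ref{two_values}). Therefore by Claim~\ref{sampling} and a union bound, properties S1--S6 hold simultaneously for all sampled committees whp, so I can reason about them as deterministic in the remainder.

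For \emph{Validity}, the plan is to track values through each phase. If every correct process invokes \emph{approve}(v), then init messages with any value $v' \neq v$ originate only at Byzantine processes, of which by S4 there are at most $B$ in the init committee. Hence the $B+1$ threshold for triggering an echo on $v'$ is never met, so no correct echo-committee member for $v'$ broadcasts, so no correct ok-committee member ever accumulates $W$ echoes for $v'$, so no ok message carries $v'$. Consequently every correct process's returned set is $\{v\}$.

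For \emph{Graded Agreement}, the key lever is the single-ok invariant (line~\ref{l.app.comm3}): each correct ok-committee member broadcasts at most one ok message in its whole lifetime. Suppose $p_i$ returns $\{v\}$ on the strength of $W$ ok messages for $v$ and $p_j$ returns $\{w\}$ on the strength of $W$ ok messages for $w$. Applying S5 to these two $W$-sized subsets of the ok committee gives an intersection of at least $B+1$ processes, and by S4 at least one of them is correct. That correct process would have had to broadcast ok for both $v$ and $w$, which is impossible unless $v=w$.

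For \emph{Termination}, I would chain S3 and S6 across the three phases. By Assumption~\ref{two_values} at most two values are input, and by S3 at least $W$ correct processes populate the init committee; by pigeonhole at least $\lceil W/2 \rceil$ of them agree on some value $v$, and a short arithmetic check using $W = \lceil (2/3+3d)\lambda \rceil$ and $B = \lfloor (1/3-d)\lambda \rfloor$ shows $\lceil W/2 \rceil \geq B+1$. So every correct echo-committee member for $v$ eventually sees $B+1$ init messages for $v$ and broadcasts echo; by S3 applied to the echo committee for $v$, at least $W$ such echoes reach every process, so every correct ok-committee member that has not already sent ok for a competing value will send ok (for $v$ or for the other value, if that one also crossed the echo threshold earlier). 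Either way, by S3 applied to the ok committee, at least $W$ ok messages eventually reach each correct process, so \emph{approve} returns a nonempty set everywhere. The main obstacle I expect is precisely this termination chain: one has to be careful that the ``second value'' pathway does not deadlock a correct ok-committee member, which is why the argument must show that \emph{some} value achieves the $W$-echo threshold at every ok-committee member, not necessarily the same one everywhere. The parameter constraints $\max\{1/\lambda,0.0362\} < d < \epsilon/3 - 1/(3\lambda)$ are what keep the inequalities $W/2 \geq B+1$ and $W + (B+1) > |C|$ comfortably satisfied for the pigeonhole steps.
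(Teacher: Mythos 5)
Your decomposition and overall route are the same as the paper's: reduce the theorem to the three lemmas, take a union bound over the constant number of committees so that S1--S6 can be treated as deterministic, and then prove Validity via S4 on the init and echo committees, Graded Agreement via S5 plus S4 plus the single-ok rule, and Termination via S3 and the pigeonhole observation that one of the at most two input values is held by at least $B+1$ correct init-committee members (your $\lceil W/2\rceil \geq B+1$ check is the paper's $\tfrac{1}{2}W > (\tfrac{1}{3}-d)\lambda \geq B$). Your explicit worry about the ``second value'' pathway in Termination is handled the same way the paper handles it: every correct ok-committee member sends \emph{some} ok message, possibly for different values, and S3 alone suffices (you mention S6, but it is not actually needed there).

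There is one genuine hole, in Validity. Your chain ends with ``no correct ok-committee member ever accumulates $W$ echoes for $v'$, so no ok message carries $v'$,'' but that inference only covers \emph{correct} senders. A Byzantine member of the ok committee could broadcast $\lr{\textsc{ok},v'}$ anyway, and since a correct process returns the set of values appearing in the $W$ ok messages it collects, even a single such forged message in that set would put $v'$ into the output and break Validity ($B \geq 1$ Byzantine ok-committee members are permitted, and nothing in your argument excludes their messages from the count of $W$). The paper closes this by recalling that an $\lr{\textsc{ok},v}$ message must carry, as a proof of validity, $W$ signed $\lr{\textsc{echo},v}$ messages; since fewer than $W$ echoes for $v'$ ever exist, no Byzantine process can produce a \emph{valid} $\lr{\textsc{ok},v'}$, and invalid ones are discarded. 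You need to invoke this proof-carrying mechanism explicitly to complete the Validity argument.
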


{\bf Complexity.} 
In each approver instance correct processes that are sampled to the four committees (lines \ref{l.app.comm1},\ref{l.app.comm2_and_echo},\ref{l.app.comm3}) send messages to all other processes.
The committee size is $O(\lambda)=O(\log n)$ whp.
Messages contain values, VRF proofs of the sender's election to the committee, signatures of $O(\lambda)$ committee members, and a constant number of bits that identify the type of message that is sent. Therefore, each message's size is at most $O(\lambda)$ words and the total word complexity of a shared coin instance is $O(n\lambda^2)=O(n\log^2 n)=\widetilde{O}(n)$ in expectation. The $\lambda^2$ appears in the expression due to the signatures of $O(\lambda)$ processes sent along the ok messages.

\subsection{Byzantine Agreement WHP}
\label{sec:consensus_protocol}
Our next step is solving Byzantine Agreement whp, formally defined as follows:

\begin{definition}[Byzantine Agreement WHP]
\label{def:consensus_whp}
In Byzantine Agreement WHP, each correct process $p_i\in\Pi$ proposes a binary input value $v_i$ and decide on an output value $decision_i$ s.t. with high probability the following properties hold:

\begin{itemize}
    \item Validity. If all correct processes propose the same value $v$, then any correct process
that decides, decides $v$.
    \item Agreement. No two correct processes decide differently.
    \item Termination. Every correct process eventually decides.
\end{itemize}

\end{definition}


We present the pseudo-code for our algorithm in Algorithm \ref{alg:consensus_protocol}.
Our protocol executes in rounds. Each round consists of two approver invocations and one call to the WHP coin.
Again, we discuss the algorithm assuming S1-S6 hold. We will argue that the algorithm decides in a constant number of rounds whp, and so these properties hold for all the committees it uses.
The local variable $est_i$ holds $p_i$'s current estimate of the decision value. The variable $decision_i$ holds $p_i$'s irrevocable decision. It is initialized to $\bot$ and set to a value in $\{0,1\}$ at most once.
Every process $p_i$ begins by setting $est_i$ to hold its initial value.
At the beginning of each round processes execute the approver with their $est$ values. If they return a singleton $\{v\}$, they choose to invoke the next approver with $v$ as their proposal and otherwise they invoke the next approver with $\bot$.
By the approver's graded agreement property, different processes do not return different singletons. Thus, at most two different values ($\bot$ and one in $\{0,1\}$) are given as an input by correct processes to the next approver, satisfying Assumption \ref{two_values}.

At this point, after all correct processes have chosen their proposals, they all invoke the WHP coin in line \ref{l.con.coin} in order to select a fall-back value. Notice that executing the WHP coin protocol after proposals have been set prevents the adversary from biasing proposals based on the coin flip. Then, in in line \ref{l.con.app2}, all processes invoke the approver with their proposals. If a process does not receive $\bot$ in its return set, it can safely decide the value it received. It does so by updating its $decision$ variable in line \ref{l.con.decide}. If it receives some value other than $\bot$ it adopts it to be its estimated value (line \ref{l.con.update_est3}), whereas if it receives only $\bot$, it adopts the coin flip (line \ref{l.con.update_est2}).
If all processes receive $\bot$ in line \ref{l.con.app1} then the probability that they all adopt the same value is at least $2\rho$, where $\rho$ is the coin's success rate. If some processes receive $v$, then the probability that all the processes that adopt the coin flip adopt $v$ is at least $\rho$.
With high probability, after a constant number of rounds, all correct processes have the same estimated value. By validity of the approver, once they all have common estimate, they decide upon it within 1 round.

\begin{algorithm}
\caption{Protocol Byzantine Agreement($v_i$): code for process $p_i$}

\begin{multicols}{2}

\begin{algorithmic}[1]

\State  $est_i \gets v_i$
\State $decision_i\gets \bot$
\Statex

\For{$r=0,1,...$}

\State  $vals \gets$ approve($est_i$) \label{l.con.app1}
\If{$vals=\{v\}$ for some $v$}
\State $propose_i\gets v$ \label{l.con.update_propose}
\EndIf
\State \textbf{otherwise} $propose_i\gets \bot$
\Statex

\State $c \gets $ whp\textunderscore coin($r$) \label{l.con.coin}
\State  $props \gets$ approve($propose_i$) \label{l.con.app2}
\If{$props=\{v\}$ for some $v\neq \bot$}
\State $est_i\gets v$ \label{l.con.update_est1}

\If{$decision_i=\bot$}
\State $decision_i\gets v$ \label{l.con.decide}
\EndIf

\Else \If{$props=\{ \bot\}$}
\State $est_i\gets c$ \label{l.con.update_est2}

\Else \Comment{$props=\{v,\bot \}$}
\State $est_i \gets v$ \label{l.con.update_est3}

\EndIf
\EndIf
\EndFor

\end{algorithmic}
\end{multicols}
\label{alg:consensus_protocol}
\end{algorithm}

We now prove our main theorem:

\begin{theorem}
\label{main_theo}
Algorithm~\ref{alg:consensus_protocol} when using an approver (Definition \ref{def:approver}) and a WHP coin (Definition \ref{def:whp_coin})
solves Byzantine Agreement whp (Definition \ref{def:consensus_whp}).
\end{theorem}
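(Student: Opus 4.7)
The plan is to verify each of the three properties of Byzantine Agreement whp (validity, agreement, termination) conditional on properties S1--S6 holding for every committee sampled during the execution, and then to close by a union bound. To that end I will first argue that the protocol terminates within $O(\log n)$ rounds whp---so only $O(\log n)$ committees are used---and then invoke Claim \ref{sampling} together with the whp guarantees of the approver and the WHP coin under a union bound over these rounds. I also note that Assumption \ref{two_values} is preserved at every invocation: the first approver of each round receives inputs in $\{0,1\}$, and the first approver's graded agreement forces every correct $propose_i$ into $\{\bot,v\}$ for a single $v\in\{0,1\}$, so the second approver also sees at most two distinct values.

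Validity is immediate from the approver: if every correct process invokes the first \emph{approve} with the same value $v$, then by validity it returns $\{v\}$ everywhere, so $propose_i=v$ at every correct process, the second \emph{approve} returns $\{v\}$ again, and each correct process sets $decision_i\gets v$ in line \ref{l.con.decide}.

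The main obstacle is agreement, and I plan to handle it via a lock-in lemma: if any correct $p_i$ obtains $props=\{v\}$ with $v\in\{0,1\}$ in the second approver of round $r$ (the only way to decide), then every correct $p_j$ finishes round $r$ with $est_j=v$. Graded agreement of the second approver forbids $p_j$ from returning any singleton $\{w\}$ with $w\neq v$, in particular $\{\bot\}$; combined with the restriction of correct inputs to $\{\bot,v\}$, the only possible returns for $p_j$ are $\{v\}$ or $\{v,\bot\}$, and in either case lines \ref{l.con.update_est1} or \ref{l.con.update_est3} assign $est_j\gets v$. Agreement then follows by induction on rounds: once all correct $est$ values coincide at $v$, validity of both approver calls in the next round yields $props=\{v\}$ everywhere, so a second decision cannot name any value other than $v$, and every correct process that has not yet decided now does so on $v$.

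For termination I will show that each round produces a common $est$ across all correct processes with probability at least $\rho$, the coin's success rate. A short case analysis on the second approver's outcomes suffices. If some correct process returns $\{v\}$ with $v\ne\bot$, the lock-in lemma gives common $est=v$ deterministically. If no correct return is the singleton $\{v\}$ but some lie in $\{\{v,\bot\},\{\bot\}\}$, processes returning $\{v,\bot\}$ adopt $v$ while those returning $\{\bot\}$ adopt the coin, so all correct estimates coincide whenever the coin flips to $v$ (probability $\ge\rho$). If all correct returns are $\{\bot\}$, everyone adopts the coin and agrees whenever it succeeds on either bit (probability $\ge 2\rho$). Once a common estimate is reached, one more round decides by validity, so the number of rounds until decision is stochastically dominated by a geometric random variable with constant success rate, giving $O(1)$ expected rounds and $O(\log n)$ rounds whp. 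This $O(\log n)$ bound on the number of committees is exactly what is needed to justify the initial conditioning on S1--S6 and on the approver/coin whp properties via a union bound, completing the proof.
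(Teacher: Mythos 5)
Your proposal follows essentially the same decomposition as the paper: the same observation that Assumption \ref{two_values} is preserved (the paper's Lemma \ref{correct_use_of_primitives}), the same validity argument via two applications of approver validity, the same ``lock-in'' argument for agreement (the paper's Lemma \ref{agreement_whp} takes the first round in which some process decides $v$ and uses graded agreement of the second approver to force every other correct process to either decide $v$ or adopt $v$ as its estimate via line \ref{l.con.update_est3}, exactly as you do), and the same case analysis yielding a per-round unification probability of at least $\rho$ (the paper's Lemma \ref{complete_same_est}, which cases on the first approver's outputs rather than the second's, but is equivalent in substance). The one place you genuinely diverge is the closing probabilistic bookkeeping. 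The paper argues that the expected number of rounds is at most $1/\rho$ and then asserts ``by Chebyshev's inequality'' that all correct processes decide within a \emph{constant} number of rounds whp, using this to claim that only a constant number of committees is sampled so that Claim \ref{sampling} covers all of them. Chebyshev only gives a constant-probability tail bound there, not a whp one; the geometric tail with constant success rate gives decision within $O(\log n)$ rounds whp, which is what you claim. Your version---$O(\log n)$ rounds whp followed by a union bound over the resulting $O(\log n)$ committees, each violating S1--S6 with probability $n^{-c}$---still yields the theorem since $O(\log n)\cdot n^{-c}=o(1)$, and it repairs a soft spot in the paper's final step rather than inheriting it.
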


We first show that Algorithm \ref{alg:consensus_protocol} satisfies the approver and WHP coin primitives' assumptions whp. Proving this allows us to use their properties while proving Theorem \ref{main_theo}.

\begin{lemma}
\label{correct_use_of_primitives}
For every round $r$ of Algorithm \ref{alg:consensus_protocol} the following hold:
\begin{enumerate}
\item All correct processes invoke approve with at most 2 different values.
\item The invocation of whp\textunderscore coin$(r)$ by a correct process $p$ is causally independent of its progress at other processes.
\end{enumerate}

\end{lemma}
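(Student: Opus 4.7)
For part (1), I would analyze the two approver invocations of a round separately. For \emph{approve}($est_i$) on line~\ref{l.con.app1}, an easy induction on $r$ shows that $est_i\in\{0,1\}$ at every correct process: $est_i$ is initialized to $v_i\in\{0,1\}$, and every subsequent update on lines~\ref{l.con.update_est1}, \ref{l.con.update_est2}, or \ref{l.con.update_est3} assigns either a coin output (in $\{0,1\}$) or a non-$\bot$ value returned by an approver (which, being an input of some correct process in the previous step, also lies in $\{0,1\}$). Consequently the multiset of inputs across correct processes contains at most the two values $0$ and $1$. For \emph{approve}($propose_i$) on line~\ref{l.con.app2}, I would appeal to the Graded Agreement property of the first approver (Lemma~\ref{rat_graded_agreement_comm}): whp, any two singletons returned by correct processes in the same round are equal. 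So there is a single $v\in\{0,1\}$ with $propose_i\in\{v,\bot\}$ across correct processes (processes that do not return a singleton set $propose_i:=\bot$), yielding at most two distinct inputs.

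For part (2), I would argue directly from the algorithmic structure of Algorithm~\ref{alg:consensus_protocol}. A correct process $p$ reaches line~\ref{l.con.coin} only after \emph{approve}($est_p$) on line~\ref{l.con.app1} returns. The code path leading to this point consists entirely of local steps, earlier approver invocations (Algorithm~\ref{alg:ratifier_comm}), and whp\_coin calls of earlier rounds. None of these handlers consumes messages generated inside whp\_coin$(r)$: the approver in Algorithm~\ref{alg:ratifier_comm} processes only messages typed \textsc{init}, \textsc{echo}, and \textsc{ok}, while whp\_coin messages in Algorithm~\ref{alg:shared_coin_comm_protocol} are tagged with their round number, so an instance of round $r'\neq r$ ignores messages of round $r$. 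Therefore no step causally triggering $p$'s invocation of whp\_coin$(r)$ originates from whp\_coin$(r)$ at any other process, which is exactly the required independence.

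The main obstacle is part (2) in the presence of an adversarial schedule that could deliver whp\_coin$(r)$ messages from a faster process into $p$'s incoming buffer before $p$ reaches line~\ref{l.con.coin}; a strictly literal reading of Lamport's happens-before would place those sends in $p$'s causal past. I would handle this using the standard event-handler semantics implicit in the paper (the same semantics underlying the analogous assumption made in Section~\ref{sec:shared_coin}), in which messages of a protocol instance are only consumed after the instance is activated and thus do not contribute algorithmically to events preceding activation. Under this semantics the relevant causal edges into $p$'s invocation are exactly the approver- and earlier-round-induced ones identified above, and the conclusion of part~(2) follows.
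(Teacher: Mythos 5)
Your proposal is correct and follows essentially the same route as the paper: part (1) is the same induction on $est_i\in\{0,1\}$ for the first approver plus an appeal to Graded Agreement for the second, and part (2) rests on the same observation that the paper states in one sentence, namely that a correct process invokes whp\_coin$(r)$ without waiting for any indication of other processes' progress in it. Your extra discussion of buffered coin messages and event-handler semantics is a reasonable elaboration of the intended reading of causal independence, not a departure from the paper's argument.
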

\begin{proof}
\begin{enumerate}
\item It is easy to see, by induction on the number of rounds, that since the processes' inputs are binary and we use a binary coin, the $est$ of all processes is in $\{0,1\}$ at the beginning of each round. Hence, the approver in line \ref{l.con.app1} is invoked with at most two different values. Due to its graded agreement property, all processes that update their propose to $v\neq \bot$ in line \ref{l.con.update_propose} update it to the same value whp. Thus, whp, in line \ref{l.con.app2} approver is invoked with either $v$ or $\bot$.

\item Correct processes call $whp$\textunderscore $coin(r)$ without waiting for indication that other processes have done so.
\end{enumerate}
\end{proof}

Next, we show that for any given round of the algorithm, (1) whp all processes complete this round, and (2) with a constant probability, they all have the same estimate value by its end.

\begin{lemma}
\label{complete_same_est}
If all correct processes begin round $r$ of Algorithm \ref{alg:consensus_protocol} then whp:
\begin{enumerate}
    \item All correct processes complete round $r$, i.e. they're not blocked during round $r$. \label{complete_same_est_1}
    \item With probability greater than $\rho$, where $\rho$ is the success rate of the WHP coin, all correct processes have the same $est$ value at the end of round $r$. \label{complete_same_est_2}
\end{enumerate}

\end{lemma}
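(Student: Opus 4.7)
My plan is to handle the two clauses separately, leveraging the primitives' guarantees together with Lemma \ref{correct_use_of_primitives}.

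For clause \ref{complete_same_est_1}, I will invoke Lemma \ref{correct_use_of_primitives} to verify that each of the three primitive calls in round $r$ -- the two \emph{approve} invocations (lines \ref{l.con.app1} and \ref{l.con.app2}) and the whp\textunderscore coin invocation (line \ref{l.con.coin}) -- meets its respective preconditions. The \emph{Termination} property of the approver (Definition \ref{def:approver}) and the liveness clause of the WHP coin (Definition \ref{def:whp_coin}) then each fire whp, and a union bound over the constant number of primitive calls per round yields completion of round $r$ whp.

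For clause \ref{complete_same_est_2}, I will first argue that Assumption \ref{two_values} is satisfied for the second approver, then case split on its return values. By \emph{Graded Agreement} of the first approver, no two correct processes return distinct singletons, so across correct processes $propose_i$ is either uniformly $\bot$ or lies in $\{v, \bot\}$ for some fixed $v \in \{0,1\}$; hence the second approver is invoked with at most two distinct inputs. I then analyze three sub-cases of its output. \emph{(i)} If every correct process invokes the second approver with $\bot$, \emph{Validity} returns $\{\bot\}$ everywhere, line \ref{l.con.update_est2} fires at every correct process, and all processes share $est_i = c$ with probability at least $2\rho$ by the coin's success rate. \emph{(ii)} If some correct process holds $propose_i = v \neq \bot$ and the singleton $\{\bot\}$ is returned at no correct process, then either line \ref{l.con.update_est1} or line \ref{l.con.update_est3} fires at every correct process, setting $est_i \gets v$ deterministically. \emph{(iii)} Otherwise, some correct process returns $\{\bot\}$; by \emph{Graded Agreement} of the second approver, $\{v\}$ is returned at no correct process, so each correct process sets either $est_i = c$ via line \ref{l.con.update_est2} or $est_i = v$ via line \ref{l.con.update_est3}. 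Agreement then holds whenever $c = v$ (and vacuously if no process returned $\{v,\bot\}$), so the probability of agreement is at least $\rho$.

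The main obstacle is justifying the $\rho$ bound in sub-case \emph{(iii)}: I must argue that the coin's output $c$ is independent of the value $v$, so that $\Pr[c = v] \geq \rho$ may be invoked. This follows because $v$ is fixed by the first approver before whp\textunderscore coin$(r)$ completes, and the coin's causal independence from other processes' progress (Lemma \ref{correct_use_of_primitives}(2)) means its success-rate guarantee applies without adversarial conditioning on $v$. All primitive guarantees used are whp bounds over committee sampling, which are preserved under the constant-size union bound taken in a single round.
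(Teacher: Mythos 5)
Your proof is correct and follows essentially the same route as the paper's: clause 1 by chaining the termination/liveness guarantees of the two approver calls and the coin, and clause 2 by a case analysis that uses graded agreement to pin down a unique non-$\bot$ value $v$, validity for the uniform-input cases, and the fact that $v$ is fixed before the coin is tossed to justify $\Pr[\text{all coin outputs equal } v]\geq\rho$. Your case split is organized around the second approver's outputs rather than the first's, but the resulting sub-cases and arguments coincide with the paper's.
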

\begin{proof}
First, if all correct processes begin round $r$ then they all invoke the approver in line \ref{l.con.app1}. Their invocation returns whp so they all invoke the coin in line \ref{l.con.coin}, and so it returns and all invoke approve in line \ref{l.con.app2}, and so it also returns, proving~(\ref{complete_same_est_1}). To show~(\ref{complete_same_est_2}), consider the possible scenarios with respect to the approver's return value:
\begin{itemize}
    \item All correct processes return singletons in line \ref{l.con.app1}:
    
    By the approver's graded agreement, whp they return $\{v\}$ with the same value $v$. Hence, all correct processes update their $propose$ to $v$. Then, they all execute approve($v$) in line \ref{l.con.app2}, and by validity, they all return $\{v\}$ whp. In this case they all update $est\gets v$.
    
    \item
    All correct processes return $\{0,1\}$ in line \ref{l.con.app1}:
    
    All correct processes update their $propose$ value to $\bot$. Then,
    they all execute approve($\bot$) in line \ref{l.con.app2}, and by validity, they  return $\{\bot\}$ whp. In this case, all correct processes then update their estimate value to the coin flip (line \ref{l.con.update_est2}). With probability at least $2\rho$ all correct processes toss the same $v\in \{0,1\}$.

    \item Some, but not all correct processes return singletons in line \ref{l.con.app1}:
    
    By graded agreement, all singletons hold the same value $v$. Thus, all correct processes propose $v$ or $\bot$ and by validity return $\{v\}, \{v,\bot\}$, or $\{\bot\}$ in line \ref{l.con.app2}.
    We examine two possible complementary sub-cases:
    
    \begin{itemize}
        \item If some correct process returns $\{v\}$ in line \ref{l.con.app2}:
        By approver's graded agreement, no correct process returns $\{\bot\}$ in line \ref{l.con.app2}, whp. Thus, whp, all correct processes update their estimate value to $v$ (in line \ref{l.con.update_est1} or \ref{l.con.update_est3}).
        
        \item If no correct process return $\{v\}$ in line \ref{l.con.app2}:
        All correct processes returns $\{v,\bot\}$ or $\{\bot\}$ in line \ref{l.con.app2}.
        All correct processes either update their estimate value to the coin flip of the WHP coin (line \ref{l.con.update_est2}) or to $v$ (line \ref{l.con.update_est3}). Since the value $v$ is determined before tossing the coin, the adversary cannot bias $v$ after viewing the coin flip and with probability at least $\rho$ all correct processes that adopt the coin's value toss $v$.
        
    \end{itemize}
\end{itemize}

In all cases, with probability greater than $\rho$ all correct processes have the same $est$ value at the end of $r$, whp.
\end{proof}

The following lemmas indicate that the Byzantine Agreement whp properties are satisfied, which completes the proof of Theorem \ref{main_theo}.

\begin{lemma}{(Validity)}
\label{validity_whp}
If at the beginning of round $r$ of Algorithm \ref{alg:consensus_protocol} all correct processes have the same estimate value $v$, then whp any correct process that has not decided before decides $v$ in round $r$.
\end{lemma}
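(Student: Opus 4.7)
The plan is to apply the validity property of the approver twice, interleaved with the observation that correct estimates are always binary (hence distinct from $\bot$).

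First I would note that by assumption all correct processes enter round $r$ with $est_i = v$ for the same $v \in \{0,1\}$ (the binary-valued estimates come from the induction argument already invoked in Lemma~\ref{correct_use_of_primitives}). Hence all of them invoke \emph{approve}$(v)$ at line~\ref{l.con.app1}. Applying the validity clause of Definition~\ref{def:approver}, every correct process obtains $vals = \{v\}$ whp, and therefore assigns $propose_i \gets v$ at line~\ref{l.con.update_propose}. The subsequent WHP coin invocation does not alter $propose_i$, so every correct process enters the second approver call with the common input $v$.

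Next I would apply the validity clause a second time to the invocation of \emph{approve}$(v)$ at line~\ref{l.con.app2}: whp, every correct process obtains $props = \{v\}$. Since $v \in \{0,1\}$, we have $v \neq \bot$, so each correct process takes the first branch of the conditional, performs $est_i \gets v$ at line~\ref{l.con.update_est1}, and if $decision_i = \bot$ executes $decision_i \gets v$ at line~\ref{l.con.decide}. Thus every correct process that had not previously decided decides $v$ in round $r$.

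The main things to be careful about are not obstacles but bookkeeping: (i) combining the two ``whp'' applications of validity via a union bound so the overall statement remains whp, and (ii) justifying $v \neq \bot$, which is immediate from the fact that both initial inputs and all later writes into $est_i$ (at lines~\ref{l.con.update_est1}, \ref{l.con.update_est2}, \ref{l.con.update_est3}) produce values in $\{0,1\}$, since the WHP coin outputs in $\{0,1\}$ and, in the other two branches, $v$ is the value carried from a singleton return of a previous approver which, by Lemma~\ref{correct_use_of_primitives}, is binary. No further probabilistic reasoning about the coin is needed here, because validity of the second approver call already forces a unanimous decision independently of the coin's outcome.
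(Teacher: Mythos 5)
Your proof is correct and follows essentially the same route as the paper: two applications of the approver's validity property, separated by the observation that the second call's common input $v$ is binary and hence distinct from $\bot$. The only point you elide is that validity alone is conditional on the calls actually returning, so you must also invoke the approver's and WHP coin's termination properties (the paper does this by citing Lemma~\ref{complete_same_est} and "validity and termination" together) to conclude that every correct process reaches line~\ref{l.con.decide}.
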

\begin{proof}
If all correct processes start round $r$ then by Lemma \ref{complete_same_est} they all complete round $r$. Since they all being with the same estimate value $v$, they all execute approve($v$) in line \ref{l.con.app1}. Hence, by approver's validity and termination, whp they all return the non-empty set $\{v\}$ and update their $propose$ values to $v$. Then, they all execute approve($v$) for the second time in line \ref{l.con.app2}, and due to the same reason, they all return $\{v\}$ whp. Any correct process that has not decided before decides $v$ in line \ref{l.con.decide}.
\end{proof}

\begin{lemma}{(Termination)}
\label{termination_whp}
Every correct process decides whp.

\end{lemma}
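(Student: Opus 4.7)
The plan is to combine Lemma~\ref{complete_same_est} with Lemma~\ref{validity_whp}: show that whp, within some polylogarithmically-bounded number $k$ of rounds, all correct processes end up with a common $est$ value, at which point the validity lemma forces a decision in the following round.

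First, I would argue by induction on $r$ that, whp, all correct processes complete each of the first $k$ rounds of Algorithm~\ref{alg:consensus_protocol}. All correct processes begin round~$0$ by the initial assignment. Assuming they all begin round~$r$, part~(1) of Lemma~\ref{complete_same_est} says they all complete round~$r$ whp and hence begin round~$r+1$. Because the algorithm employs only a constant number of committees per round and each committee satisfies S1--S6 with probability $\geq 1 - 1/n^{c'}$ for a sufficiently large constant $c'$, a union bound over the $O(k) = \mathrm{polylog}(n)$ committees preserves the whp guarantee, so Lemmas~\ref{correct_use_of_primitives}, \ref{complete_same_est}, and \ref{validity_whp} may all be invoked in every one of the $k$ rounds without degrading the overall probability.

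Next, I would apply part~(2) of Lemma~\ref{complete_same_est}: conditional on all correct processes starting round~$r$, the probability that they end round~$r$ with the same $est$ value is at least $\rho > 0$, where $\rho$ is the constant success rate of the WHP coin. These convergence events are independent across rounds, because each round invokes fresh VRF-sampled committees and a fresh instance of whp\textunderscore coin$(r)$, whose outcome the delayed-adaptive adversary cannot bias (as established in Lemma~\ref{correct_use_of_primitives}). Therefore the probability that no round among the first $k$ rounds produces a common estimate is at most $(1-\rho)^k$; choosing $k = \Theta(\log n)$ makes this $O(1/n^c)$ for any desired constant $c$. Let $r^\star \leq k$ be the first such round. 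All correct processes then begin round $r^\star+1$ with the same estimate $v$, and by Lemma~\ref{validity_whp} every correct process that has not yet decided sets $decision_i \gets v$ in round $r^\star+1$ whp.

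The main obstacle is the combined probabilistic bookkeeping: Lemmas~\ref{complete_same_est} and~\ref{validity_whp} are themselves whp statements that implicitly condition on the sampling properties S1--S6 holding for the committees they touch, so I must ensure that a single union bound over all $\mathrm{polylog}(n)$ committees, all round-completion events, the $\Theta(\log n)$ independent convergence trials, and the final validity invocation still yields a $1 - 1/\mathrm{poly}(n)$ guarantee. Once this union bound is made explicit, combining the three steps above shows that whp every correct process has $decision_i \neq \bot$ by round $r^\star+1 \leq k+1$, completing the proof.
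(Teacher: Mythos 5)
Your proposal is correct, but it takes a genuinely different route from the paper's proof. The paper argues that since each round yields a common estimate with probability at least $\rho$, the expected number of rounds to decision is at most $\frac{1}{\rho}$, a constant, and then invokes Chebyshev's inequality to conclude that all correct processes decide within a constant number of rounds whp; this keeps the number of sampled committees constant, which is how the paper justifies applying S1--S6 to all of them. You instead run the geometric argument explicitly: the probability that none of the first $k$ rounds converges is at most $(1-\rho)^k$, choose $k=\Theta(\log n)$ to make this $1/\mathrm{poly}(n)$, and then verify by a union bound that the $O(\log n)$ committees touched along the way still all satisfy S1--S6 whp. Your accounting is arguably the more careful one: the number of rounds to convergence is essentially geometric, so it exceeds any fixed constant with constant probability, and a Chebyshev bound with constant mean and variance cannot by itself deliver a $1-1/\mathrm{poly}(n)$ guarantee for a constant round budget --- whereas your $\Theta(\log n)$-round budget does, at the modest cost of a polylogarithmically larger worst-case (though not expected) committee count, which your union bound correctly absorbs. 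One small refinement: you do not actually need independence of the convergence events across rounds; it suffices that Lemma~\ref{complete_same_est} bounds the success probability of each round by $\rho$ conditioned on all correct processes starting that round, regardless of history, which yields the same $(1-\rho)^k$ bound by chaining conditional probabilities.
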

\begin{proof}

By Lemma \ref{complete_same_est}, for every round $r$ of Algorithm \ref{alg:consensus_protocol}, with probability greater than $\rho$, where $\rho$ is the success rate of the WHP coin, all correct processes have the same $est$ value at the end of $r$ whp.
Hence, by Lemma \ref{validity_whp}, with probability greater than $\rho$, all correct processes decide by round $r+1$ whp. 
It follows that the expected number of rounds until all processes decide is bounded by $\frac{1}{\rho}$, which is constant.
Thus, by Chebyshev's inequality, whp all correct processes decide within a constant number of rounds.
\end{proof}

\begin{lemma}{(Agreement)}
\label{agreement_whp}
No two correct processes decide different values whp.
\end{lemma}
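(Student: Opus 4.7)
\textbf{Proof plan for Lemma \ref{agreement_whp}.} My approach is to identify the first round in which any correct process decides and then use the graded agreement property of the second approver (line \ref{l.con.app2}) to show that all correct processes agree on their estimate from that point on, after which Lemma \ref{validity_whp} takes over. All statements below should be understood to hold whp; since by Lemma \ref{termination_whp} only a constant number of rounds are executed whp, a union bound over the constantly many invocations of the approver and the WHP coin preserves the whp guarantee.

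First, let $r^*$ be the smallest round in which some correct process $p_i$ decides, and let $v \in \{0,1\}$ be its decision value. By the code (line \ref{l.con.decide}), this means the approver in line \ref{l.con.app2} of round $r^*$ returned the singleton $\{v\}$ at $p_i$. I then invoke the graded agreement property (Lemma \ref{rat_graded_agreement_comm}) of that same approver instance: no other correct process can return the singleton $\{\bar v\}$ (where $\bar v$ denotes any value distinct from $v$, in particular $\bot$). Consequently, every correct process in round $r^*$ receives either $\{v\}$, $\{v,\bot\}$, or $\{\bot\}$ — but actually the case $\{\bot\}$ is excluded because it would be a singleton disagreeing with $\{v\}$. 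Hence every correct process returns a set containing $v$ (and possibly $\bot$) from line \ref{l.con.app2}, and so every correct process updates $est_i \gets v$ in either line \ref{l.con.update_est1} or line \ref{l.con.update_est3}.

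Next, I use Lemma \ref{validity_whp} (validity). Since all correct processes begin round $r^*+1$ with the common estimate $v$, whp any correct process that has not yet decided decides $v$ in round $r^*+1$, and in particular none decides a value different from $v$ in that round. A simple induction on the round number extends this: if at the start of some round $r' > r^*$ all correct processes have $est = v$, then by the approver's validity (applied to both invocations in round $r'$) they all return $\{v\}$ from both approvers in round $r'$, so they all set $est \gets v$ in line \ref{l.con.update_est1} and any undecided process decides $v$. Thus no correct process ever decides any value other than $v$, establishing agreement whp.

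The main obstacle, as I see it, is bookkeeping of the whp events rather than a conceptual one: graded agreement and validity of the approver each hold only whp, and each invocation of the WHP coin and the approver contributes its own failure probability, so I must argue that the total failure probability remains sub-constant. This is handled cleanly because Lemma \ref{termination_whp} already establishes that the execution terminates within a constant number of rounds whp; restricting attention to that event and union-bounding over the $O(1)$ applications of Lemma \ref{rat_graded_agreement_comm} and the $O(1)$ applications of Lemma \ref{validity_whp} used above keeps the error whp, completing the argument.
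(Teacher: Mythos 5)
Your proof is correct and follows essentially the same route as the paper's: identify the first deciding round, use the graded agreement of the second approver invocation to show every correct process returns a set containing $v$ (ruling out both $\{\bar v\}$ and $\{\bot\}$) and hence adopts $est \gets v$, then apply Lemma \ref{validity_whp} in the next round. Your explicit exclusion of the $\{\bot\}$ case and the closing union-bound over the constantly many whp events are just slightly more detailed versions of steps the paper leaves implicit.
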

\begin{proof}
Let $r$ be the first round in which some process $p_i$ decides on some value $v\in \{0,1\}$. Thus, $p_i$'s invocation to approver in line \ref{l.con.app2} of round $r$ returns $\{v\}$. If another correct process $p_j$ decides $w$ in round $r$ then its approver call in line \ref{l.con.app2} of round $r$ returns $\{w\}$. By approver's graded agreement, $v=w$ whp.
Consider a correct process $p_k$ that does not decide in round $r$. By the definition of $r$, $p_k$ hasn't decided in any round $r'<r$. By approver's graded agreement, whp, $p_k$ returns $\{v,\bot\}$ in line \ref{l.con.app2} of round $r$, and $p_k$ updates its $est_{k}$ value to $v$ in line \ref{l.con.update_est3}.
It follows that whp all correct processes have $v$ as their estimate value at the beginning of round $r+1$. By Lemma \ref{validity_whp}, every correct process that has not decided in round $r$ decides $v$ in round $r+1$ whp.
\end{proof}

{\bf Complexity.} 
In each round of the protocol, all correct processes invoke two approver calls and one WHP coin instance. Due to the constant success rate of the WHP coin, the expected number of rounds before all correct processes decide is constant. Thus, due to the word complexity of the WHP coin and approver, the expected word complexity is $O(n\log^2 n)=\widetilde{O}(n)$ and the time complexity is $O(1)$ in expectation.

\section{Conclusions and Future Directions}
\label{sec:conclusions}

We have presented the first sub-quadratic asynchronous Byzantine Agreement algorithm.
To construct the algorithm, we introduced two techniques. First, we presented a shared coin algorithm that requires a trusted PKI and uses VRFs. Second, we formalized VRF-based committee sampling in the asynchronous model for the first time.

Our algorithm solves Byzantine Agreement with high probability. It would be interesting to understand whether some of the problem's properties can be satisfied with probability 1, while keeping the sub-quadratic communication cost. In addition, in order to achieve the constant success rate of the coin and guarantee the committees' properties, we bounded $\epsilon$ from below by a constant. This bound prevented us from achieving optimal resilience. The question whether it is possible to relax this bound to allow better resilience remains open.

\section*{Acknowledgements}
We thank Ittai Abraham, Dahlia Malkhi, Kartik Nayak and Ling Ren for insightful initial discussions.

\bibliography{references}

\begin{thebibliography}{10}

\bibitem{abraham2019communication}
Ittai Abraham, TH~Hubert Chan, Danny Dolev, Kartik Nayak, Rafael Pass, Ling
  Ren, and Elaine Shi.
\newblock Communication complexity of byzantine agreement, revisited.
\newblock In {\em Proceedings of the 2019 ACM Symposium on Principles of
  Distributed Computing}, pages 317--326, 2019.

\bibitem{Abraham2018HotStuffTL}
Ittai Abraham, Guy Golan-Gueta, and Dahlia Malkhi.
\newblock Hot-stuff the linear, optimal-resilience, one-message bft devil.
\newblock {\em CoRR}, abs/1803.05069, 2018.

\bibitem{abraham2019asymptotically}
Ittai Abraham, Dahlia Malkhi, and Alexander Spiegelman.
\newblock Asymptotically optimal validated asynchronous byzantine agreement.
\newblock In {\em Proceedings of the 2019 ACM Symposium on Principles of
  Distributed Computing}, pages 337--346, 2019.

\bibitem{ahmadi2020cost}
Mohamad Ahmadi, Abdolhamid Ghodselahi, Fabian Kuhn, and Anisur~Rahaman Molla.
\newblock The cost of global broadcast in dynamic radio networks.
\newblock {\em Theoretical Computer Science}, 806:363--387, 2020.

\bibitem{Helix}
Avi Asayag, Gad Cohen, Ido Grayevsky, Maya Leshkowitz, Ori Rottenstreich, Ronen
  Tamari, and David Yakira.
\newblock Helix: {A} scalable and fair consensus algorithm resistant to
  ordering manipulation.
\newblock {\em {IACR} Cryptology ePrint Archive}, 2018:863, 2018.

\bibitem{attiya2004distributed}
Hagit Attiya and Jennifer Welch.
\newblock {\em Distributed computing: fundamentals, simulations, and advanced
  topics}, volume~19.
\newblock John Wiley \& Sons, 2004.

\bibitem{awerbuch2007denial}
Baruch Awerbuch and Christian Scheideler.
\newblock A denial-of-service resistant dht.
\newblock In {\em International Symposium on Distributed Computing}, pages
  33--47. Springer, 2007.

\bibitem{baudet2019state}
Mathieu Baudet, Avery Ching, Andrey Chursin, George Danezis, Fran{\c{c}}ois
  Garillot, Zekun Li, Dahlia Malkhi, Oded Naor, Dmitri Perelman, and Alberto
  Sonnino.
\newblock State machine replication in the libra blockchain.
\newblock {\em The Libra Assn., Tech. Rep}, 2019.

\bibitem{ben1983another}
Michael Ben-Or.
\newblock Another advantage of free choice (extended abstract): Completely
  asynchronous agreement protocols.
\newblock In {\em Proceedings of the second annual ACM symposium on Principles
  of distributed computing}, pages 27--30. ACM, 1983.

\bibitem{cryptoeprint:2020:851}
Erica Blum, Jonathan Katz, Chen-Da Liu-Zhang, and Julian Loss.
\newblock Asynchronous byzantine agreement with subquadratic communication.
\newblock Cryptology ePrint Archive, Report 2020/851, 2020.

\bibitem{bracha1987asynchronous}
Gabriel Bracha.
\newblock Asynchronous byzantine agreement protocols.
\newblock {\em Information and Computation}, 75(2):130--143, 1987.

\bibitem{bracha1983resilient}
Gabriel Bracha and Sam Toueg.
\newblock Resilient consensus protocols.
\newblock In {\em Proceedings of the second annual ACM symposium on Principles
  of distributed computing}, pages 12--26. ACM, 1983.

\bibitem{cachin2005random}
Christian Cachin, Klaus Kursawe, and Victor Shoup.
\newblock Random oracles in {C}onstantinople: Practical asynchronous byzantine
  agreement using cryptography.
\newblock {\em Journal of Cryptology}, 18(3):219--246, 2005.

\bibitem{canetti1993fast}
Ran Canetti and Tal Rabin.
\newblock Fast asynchronous byzantine agreement with optimal resilience.
\newblock In {\em STOC}, volume~93, pages 42--51. Citeseer, 1993.

\bibitem{chen2018algorand}
Jing Chen, Sergey Gorbunov, Silvio Micali, and Georgios Vlachos.
\newblock Algorand agreement: Super fast and partition resilient byzantine
  agreement.
\newblock {\em IACR Cryptology ePrint Archive}, 2018:377, 2018.

\bibitem{dodis2005VRF}
Yevgeniy Dodis and Aleksandr Yampolskiy.
\newblock A verifiable random function with short proofs and keys.
\newblock In {\em International Workshop on Public Key Cryptography}, pages
  416--431. Springer, 2005.

\bibitem{DolevBound}
Danny Dolev and R\"{u}diger Reischuk.
\newblock Bounds on information exchange for byzantine agreement.
\newblock {\em J. ACM}, 32(1):191–204, January 1985.

\bibitem{fischer1985impossibility}
Michael~J Fischer, Nancy~A Lynch, and Michael~S Paterson.
\newblock Impossibility of distributed consensus with one faulty process.
\newblock {\em Journal of the ACM (JACM)}, 32(2):374--382, 1985.

\bibitem{Franklin2013VRF}
Matthew Franklin and Haibin Zhang.
\newblock Unique ring signatures: A practical construction.
\newblock In {\em International Conference on Financial Cryptography and Data
  Security}, pages 162--170. Springer, 2013.

\bibitem{Algorand}
Yossi Gilad, Rotem Hemo, Silvio Micali, Georgios Vlachos, and Nickolai
  Zeldovich.
\newblock Algorand: Scaling byzantine agreements for cryptocurrencies.
\newblock In {\em Proceedings of the 26th Symposium on Operating Systems
  Principles}, pages 51--68, 2017.

\bibitem{hanke2018dfinity}
Timo Hanke, Mahnush Movahedi, and Dominic Williams.
\newblock Dfinity technology overview series, consensus system.
\newblock {\em arXiv preprint arXiv:1805.04548}, 2018.

\bibitem{king2011breaking}
Valerie King and Jared Saia.
\newblock Breaking the {$O(n^2)$} bit barrier: scalable byzantine agreement
  with an adaptive adversary.
\newblock {\em Journal of the ACM (JACM)}, 58(4):1--24, 2011.

\bibitem{king2013byzantine}
Valerie King and Jared Saia.
\newblock Byzantine agreement in polynomial expected time.
\newblock In {\em Proceedings of the forty-fifth annual ACM symposium on Theory
  of computing}, pages 401--410. ACM, 2013.

\bibitem{klonowski2019ordered}
Marek Klonowski, Dariusz~R Kowalski, and Jaros{\l}aw Mirek.
\newblock Ordered and delayed adversaries and how to work against them on a
  shared channel.
\newblock {\em Distributed Computing}, 32(5):379--403, 2019.

\bibitem{kwon2014tendermint}
Jae Kwon.
\newblock Tendermint: Consensus without mining.
\newblock {\em Draft v. 0.6, fall}, 1(11), 2014.

\bibitem{lamport2019time}
Leslie Lamport.
\newblock Time, clocks, and the ordering of events in a distributed system.
\newblock {\em Commun. ACM}, 21(7):558–565, July 1978.

\bibitem{lamport2019byzantine}
Leslie Lamport, Robert Shostak, and Marshall Pease.
\newblock The byzantine generals problem.
\newblock {\em ACM Trans. Program. Lang. Syst.}, 4(3):382–401, July 1982.

\bibitem{DBLP:conf/innovations/Micali17}
Silvio Micali.
\newblock Very simple and efficient byzantine agreement.
\newblock In Christos~H. Papadimitriou, editor, {\em 8th Innovations in
  Theoretical Computer Science Conference, {ITCS} 2017, January 9-11, 2017,
  Berkeley, CA, {USA}}, volume~67 of {\em LIPIcs}, pages 6:1--6:1. Schloss
  Dagstuhl - Leibniz-Zentrum f{\"{u}}r Informatik, 2017.
\newblock \href {https://doi.org/10.4230/LIPIcs.ITCS.2017.6}
  {\path{doi:10.4230/LIPIcs.ITCS.2017.6}}.

\bibitem{micali1999VRF}
Silvio Micali, Michael Rabin, and Salil Vadhan.
\newblock Verifiable random functions.
\newblock In {\em Foundations of Computer Science, 1999. 40th Annual Symposium
  on}, pages 120--130. IEEE, 1999.

\bibitem{mostefaoui2015signature}
Achour Most{\'e}faoui, Hamouma Moumen, and Michel Raynal.
\newblock Signature-free asynchronous binary byzantine consensus with $t <
  n/3$, ${O}(n^2)$ messages, and ${O}(1)$ expected time.
\newblock {\em Journal of the ACM (JACM)}, 62(4):31, 2015.

\bibitem{nakamoto2012bitcoin}
Satoshi Nakamoto.
\newblock Bitcoin: A peer-to-peer electronic cash system, 2009.

\bibitem{naor2019cogsworth}
Oded Naor, Mathieu Baudet, Dahlia Malkhi, and Alexander Spiegelman.
\newblock Cogsworth: Byzantine view synchronization.
\newblock {\em arXiv preprint arXiv:1909.05204}, 2019.

\bibitem{rabin1983randomized}
Michael~O Rabin.
\newblock Randomized byzantine generals.
\newblock In {\em 24th Annual Symposium on Foundations of Computer Science
  (sfcs 1983)}, pages 403--409. IEEE, 1983.

\bibitem{robinson2018breaking}
Peter Robinson, Christian Scheideler, and Alexander Setzer.
\newblock Breaking the {$\Omega(\sqrt{n})$} barrier: Fast consensus under a
  late adversary.
\newblock In {\em 30th ACM Symposium on Parallelism in Algorithms and
  Architectures, SPAA 2018}, pages 173--182. ACM New York, 2018.

\bibitem{shamir1979share}
Adi Shamir.
\newblock How to share a secret.
\newblock {\em Communications of the ACM}, 22(11):612--613, 1979.

\bibitem{spiegelman2020search}
Alexander Spiegelman.
\newblock In search for a linear byzantine agreement.
\newblock {\em arXiv preprint arXiv:2002.06993}, 2020.

\end{thebibliography}

\appendix
 \newpage
\begin{appendices}
\section{Sampling proofs}
\label{chernoof_appendix}

\resampling*

\begin{proof}

Recall that $d$ is a parameter of the system such that $\max \{\frac{1}{\lambda},0.0362\}<d < \frac{\epsilon}{3}-\frac{1}{3\lambda}$.

In order to prove these properties we use two Chernoff bounds:

Suppose $X_1,...,X_n$ are independent random variables taking values in $\{0,1\}$. Let $X$ denote their sum and let $E[X]$ denote the sum's expected value.

\begin{equation}
\label{chernoff}
\forall 0 \leq \delta: \; Pr[X\geq (1+\delta)E[X]]\leq e^{-\frac{\delta ^2E[X]}{2+\delta}}
\end{equation}
\begin{equation}
\label{chernoff+}
\forall 0 \leq \delta\leq 1: \; Pr[X\leq (1-\delta)E[X]]\leq e^{-\frac{\delta ^2E[X]}{2}}
\end{equation}

\begin{lemma}[S1]
\label{comm_size_upper}
$|C(s,\lambda)|\leq (1+d)\lambda$ whp.
 \end{lemma}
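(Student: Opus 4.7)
The plan is a direct application of the Chernoff bound~(\ref{chernoff}). By the validated committee sampling primitive, each process $p_i$ is sampled to $C(s,\lambda)$ independently with probability $\lambda/n$. Hence $|C(s,\lambda)| = \sum_{i=1}^n X_i$ where $X_i$ is the indicator of $p_i \in C(s,\lambda)$, the $X_i$ are independent, and $\mathbb{E}[|C(s,\lambda)|] = \lambda$. (The independence follows from the pseudorandomness of the VRF outputs computed under distinct private keys.)

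I would then apply~(\ref{chernoff}) with $\delta = d$ to obtain
\begin{equation*}
\Pr\bigl[|C(s,\lambda)| \geq (1+d)\lambda\bigr] \;\leq\; \exp\!\left(-\frac{d^2 \lambda}{2+d}\right).
\end{equation*}
Substituting $\lambda = 8\ln n$, the right-hand side becomes $n^{-8d^2/(2+d)}$. Since the system parameters enforce $d > 0.0362$, the exponent $8d^2/(2+d)$ is a strictly positive constant, so this probability is polynomially small in $n$. Taking the complementary event yields $|C(s,\lambda)| \leq (1+d)\lambda$ with high probability, as required.

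There is essentially no obstacle here: the proof is a one-line Chernoff calculation once the sampling independence is invoked. The only subtle point worth spelling out is why the $X_i$ may be treated as mutually independent Bernoulli$(\lambda/n)$ variables — namely, that each process evaluates its own $\emph{sample}_i$ using an independent VRF key, and the committee sampling primitive guarantees the marginal success probability is $\lambda/n$. The specific numerical lower bound on $d$ will play a more substantive role in the trickier properties (S2–S6), but for S1 any $d > 0$ bounded away from $0$ suffices to make the tail bound vanish polynomially in $n$.
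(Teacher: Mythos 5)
Your proof is correct and follows essentially the same route as the paper's: both model $|C(s,\lambda)|$ as a sum of independent indicators with mean $\lambda$ and apply the multiplicative Chernoff upper-tail bound~(\ref{chernoff}) with $\delta=d$ to get a polynomially small failure probability. The only cosmetic difference is that you substitute $\lambda=8\ln n$ explicitly while the paper keeps the generic $\lambda=const\cdot\ln n$, which does not change the argument.
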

\begin{proof}

Let $X$ be a random variable that represents the number of  processes that are sampled to $C(s,\lambda)$.
$X\sim Bin(n,\frac{const\cdot \ln n)}{n})$, thus $E[X]=const\cdot \ln n$.

Placing $\delta=d \geq 0$ in \ref{chernoff} we get:

\begin{equation*}
Pr[X\geq (1+d)const\cdot \ln n]\leq e^{-\frac{d^2const\cdot \ln n}{2+d}}.
\end{equation*}

Denote by $c_1$ the constant
$\frac{const\cdot d^2}{2+d}$. We get: 

\begin{equation*}
Pr[X\geq (1+d)const\cdot \ln n]\leq e^{-c_1\ln n}.
\end{equation*}

Thus, 
\begin{equation*}
Pr[X< (1+d)const\cdot \ln n]=Pr[X< (1+d)\lambda] > 1-\frac{1}{e^{c_1\ln n}}=1-\frac{1}{n^{c_1}}.
\end{equation*}

\end{proof}

\begin{lemma}[S2]
\label{comm_size_lower}
$|C(s,\lambda)|\geq (1-d)\lambda$ whp.
 \end{lemma}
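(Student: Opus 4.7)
The plan is to mirror the proof of S1 but using the lower-tail Chernoff bound (\ref{chernoff+}) in place of the upper-tail one. Let $X = |C(s,\lambda)|$ be the number of processes sampled into the committee. Each of the $n$ processes is independently sampled with probability $\lambda/n$, so $X \sim \mathrm{Bin}(n,\lambda/n)$ and $E[X] = \lambda = \mathrm{const}\cdot \ln n$. I will apply (\ref{chernoff+}) with $\delta = d$; this is legal because the system parameter $d$ satisfies $d < \epsilon/3 - 1/(3\lambda) < 1/9 < 1$, and $d > 0$ by assumption.

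Substituting yields
\begin{equation*}
\Pr\bigl[X \leq (1-d)\lambda\bigr] \;\leq\; e^{-d^2\,\mathrm{const}\cdot\ln n / 2} \;=\; \frac{1}{n^{c_2}},
\end{equation*}
where $c_2 \triangleq \mathrm{const}\cdot d^2/2$ is a positive constant (independent of $n$). Taking the complement,
\begin{equation*}
\Pr\bigl[X \geq (1-d)\lambda\bigr] \;>\; 1 - \frac{1}{n^{c_2}},
\end{equation*}
which tends to $1$ as $n \to \infty$, establishing the whp bound.

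There is no real obstacle here beyond verifying the admissibility of $\delta = d$ for (\ref{chernoff+}). The proof is structurally identical to that of Lemma~\ref{comm_size_upper}, with the only changes being the direction of the deviation and the slightly different constant in the exponent (the $2+\delta$ denominator of the upper tail is replaced by $2$, which is in fact tighter here).
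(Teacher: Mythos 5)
Your proof is correct and follows essentially the same route as the paper's: model $|C(s,\lambda)|$ as $\mathrm{Bin}(n,\lambda/n)$ with mean $\lambda$, apply the lower-tail Chernoff bound with $\delta=d$, and obtain the failure probability $n^{-c_2}$ with $c_2=\mathrm{const}\cdot d^2/2$. Your explicit check that $0<d<1$ (so $\delta=d$ is admissible) is a welcome touch, and you also fix the inequality directions that are garbled in the paper's own write-up.
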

\begin{proof}

Let $X$ be a random variable that represents the number of  processes that are sampled to $C(s,\lambda)$.
$X\sim Bin(n,\frac{const\cdot \ln n}{n})$, thus $E[X]=const\cdot \ln n$.

Placing $\delta=d $ it holds that $0\leq\delta\leq 1$ in \ref{chernoff+} and we get:

\begin{equation*}
Pr[X\geq (1-d)const\cdot \ln n]\leq e^{-\frac{d^2const\cdot \ln n}{2}}.
\end{equation*}

Denote by $c_2$ the constant
$\frac{const\cdot d^2}{2}$. We get: 

\begin{equation*}
Pr[X\geq (1-d)const\cdot \ln n]\leq e^{-c_2\ln n}.
\end{equation*}

Thus, 
\begin{equation*}
Pr[X<(1-d)const\cdot \ln n]=Pr[X< (1-d)\lambda] > 1-\frac{1}{e^{c_2\ln n}}=1-\frac{1}{n^{c_2}}.
\end{equation*}

\end{proof}

\begin{lemma}[S3]
At least $W$ processes in $C(s,\lambda)$ are correct whp.

\end{lemma}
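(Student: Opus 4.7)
The plan is to reduce the statement to an application of the multiplicative Chernoff lower tail bound (\ref{chernoff+}), very much in the style of Lemmas~S1 and~S2 above. Let $Y$ denote the number of \emph{correct} processes sampled into $C(s,\lambda)$. Since each of the $n-f$ correct processes is sampled independently with probability $\lambda/n$, the variable $Y$ is binomial with expectation
\[
\mu = \frac{(n-f)\lambda}{n} \ \geq\ \Bigl(\tfrac{2}{3}+\epsilon\Bigr)\lambda,
\]
using $f=\bigl(\tfrac13-\epsilon\bigr)n$. The goal is to show $Y \ge W = \bigl\lceil (\tfrac{2}{3}+3d)\lambda\bigr\rceil$ with high probability.

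Next I would verify that the threshold $W$ is strictly below $\mu$ by a constant factor. The hypothesis $d < \tfrac{\epsilon}{3}-\tfrac{1}{3\lambda}$ yields $3d < \epsilon - \tfrac{1}{\lambda}$, and hence $W \le (\tfrac{2}{3}+3d)\lambda + 1 < (\tfrac{2}{3}+\epsilon)\lambda \le \mu$. Writing $W = (1-\delta)\mu$, this gives the lower bound
\[
\delta \ \geq\ \frac{\mu - W}{\mu} \ \geq\ \frac{\epsilon - 3d - \tfrac{1}{\lambda}}{\tfrac{2}{3}+\epsilon},
\]
which by the assumption on $d$ is a strictly positive constant independent of $n$.

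Now apply (\ref{chernoff+}) with this $\delta$:
\[
\Pr[Y \le W] \ =\ \Pr[Y \le (1-\delta)\mu] \ \leq\ \exp\!\Bigl(-\tfrac{\delta^2 \mu}{2}\Bigr).
\]
Since $\mu = \Theta(\lambda) = \Theta(\ln n)$ and $\delta$ is a positive constant, there is a constant $c_3 > 0$ (depending on $\epsilon$ and $d$ but not on $n$) such that $\Pr[Y \le W] \le e^{-c_3 \ln n} = n^{-c_3}$. Equivalently, $Y > W$, and so at least $W$ correct processes lie in $C(s,\lambda)$, with probability at least $1 - n^{-c_3}$, as required.

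The only delicate point is the arithmetic check that the gap $\mu - W$ is a constant fraction of $\mu$, which is exactly the purpose of the technical hypotheses on $\epsilon$ and $d$ stated in the model section; once that gap is established, the Chernoff bound handles the rest in complete analogy with the proofs of Lemmas~S1 and~S2.
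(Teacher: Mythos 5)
Your proof is correct and follows essentially the same route as the paper: model the number of correct sampled processes as a binomial with mean $(\tfrac{2}{3}+\epsilon)\lambda$, use the hypothesis $d<\tfrac{\epsilon}{3}-\tfrac{1}{3\lambda}$ to show $W$ sits a constant relative distance below that mean, and apply the multiplicative Chernoff lower-tail bound. Your $\delta \geq \frac{\epsilon-3d-1/\lambda}{2/3+\epsilon}$ is in fact exactly the paper's choice $\delta = 1-\frac{2/3+d'}{2/3+\epsilon}$ with $d'=3d+\tfrac{1}{\lambda}$.
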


\begin{proof}

Let $X$ be a random variable that represents the number of correct processes that are sampled to $C(s,\lambda)$.
$X\sim Bin((\frac{2}{3}+\epsilon)n,\frac{const\cdot \ln n}{n})$, thus $E[X]=(\frac{2}{3}+\epsilon)const\cdot \ln n$.
Let $d'=3d+\frac{1}{\lambda}$.
Notice that 
$1-\frac{\frac{2}{3}+d'}{\frac{2}{3}+\epsilon}\leq 1$ and also
$1-\frac{\frac{2}{3}+d'}{\frac{2}{3}+\epsilon}=1-\frac{\frac{2}{3}+3d+\frac{1}{\lambda}}{\frac{2}{3}+\epsilon}=\frac{\frac{2}{3}+\epsilon-\frac{2}{3}-3d-\frac{1}{\lambda}}{\frac{2}{3}+\epsilon} \geq
\frac{\frac{3}{\lambda}+\frac{1}{\lambda}-3d-\frac{1}{\lambda}}{\frac{2}{3}+\epsilon}
\geq 0$.
Hence, we can put $\delta=1-\frac{\frac{2}{3}+d'}{\frac{2}{3}+\epsilon}$ in (\ref{chernoff+}) and get:

\begin{equation*}
Pr[X\leq (1-(1-\frac{\frac{2}{3}+d'}{\frac{2}{3}+\epsilon}))(\frac{2}{3}+\epsilon)const\cdot \ln n]\leq e^{-\frac{(1-\frac{\frac{2}{3}+d'}{\frac{2}{3}+\epsilon}) ^2(\frac{2}{3}+\epsilon)const\cdot \ln n}{2}},
\end{equation*}

\begin{equation*}
Pr[X\leq (\frac{\frac{2}{3}+d'}{\frac{2}{3}+\epsilon})(\frac{2}{3}+\epsilon)const\cdot \ln n]\leq e^{-\frac{(1-\frac{\frac{2}{3}+d'}{\frac{2}{3}+\epsilon}) ^2(\frac{2}{3}+\epsilon)const\cdot \ln n}{2}},
\end{equation*}

\begin{equation*}
Pr[X\leq (\frac{2}{3}+d')const\cdot \ln n]\leq e^{-\frac{(1-\frac{\frac{2}{3}+d'}{\frac{2}{3}+\epsilon}) ^2(\frac{2}{3}+\epsilon)const\cdot \ln n}{2}}.
\end{equation*}

Denote by $c_3$ the constant $\frac{const\cdot (1-\frac{\frac{2}{3}+d'}{\frac{2}{3}+\epsilon}) ^2(\frac{2}{3}+\epsilon)}{2}$. We get: 

\begin{equation*}
Pr[X\leq (\frac{2}{3}+d')const\cdot \ln n]\leq e^{-c_3\ln n}.
\end{equation*}

Thus, 
\begin{equation*}
Pr[X> (\frac{2}{3}+d')const\cdot \ln n]=Pr[X> (\frac{2}{3}+d')\lambda] > 1-\frac{1}{e^{c_3\ln n}}=1-\frac{1}{n^{c_3}}.
\end{equation*}

To this point we've proved that at least
  $(\frac{2}{3}+d')\lambda$ processes in $C(s,\lambda)$ are correct whp. It follows that at least
  $(\frac{2}{3}+3d+\frac{1}{\lambda})\lambda=(\frac{2}{3}+3d)\lambda+1$ processes in $C(s,\lambda)$ are correct whp.

  As $\left \lceil{(\frac{2}{3}+3d)\lambda}\right \rceil \leq (\frac{2}{3}+3d)\lambda+1$ we conclude that at least
  $W=\left \lceil{(\frac{2}{3}+3d)\lambda}\right \rceil$ processes in $C(s,\lambda)$ are correct whp.

\end{proof}

\begin{lemma}[S4]
 At most $B$ processes in $C(s,\lambda)$ are Byzantine whp.
\end{lemma}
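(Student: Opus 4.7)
The plan is to mirror the Chernoff-bound arguments used in the proofs of S1--S3, simply applying the upper tail bound (\ref{chernoff}) to the count of Byzantine members of the committee. Let $X$ denote the number of Byzantine processes sampled into $C(s,\lambda)$. Since there are at most $f = (\tfrac{1}{3}-\epsilon)n$ Byzantine processes and each is sampled independently with probability $\lambda/n = (const \cdot \ln n)/n$, the variable $X$ is stochastically dominated by $\mathrm{Bin}((\tfrac{1}{3}-\epsilon)n,\lambda/n)$, so $E[X] \leq (\tfrac{1}{3}-\epsilon)\,const\cdot \ln n$.

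Next I would choose the Chernoff parameter $\delta$ so that $(1+\delta)E[X] = (\tfrac{1}{3}-d)\,const \cdot \ln n$, that is
\[
1+\delta \;=\; \frac{\tfrac{1}{3}-d}{\tfrac{1}{3}-\epsilon}.
\]
The constraint on $d$ in the excerpt forces $d < \epsilon/3 - 1/(3\lambda) < \epsilon$, so $\delta > 0$ and the inequality (\ref{chernoff}) applies. Plugging in yields
\[
\Pr\bigl[X \geq (\tfrac{1}{3}-d)\lambda\bigr] \;\leq\; \exp\!\left(-\frac{\delta^2 E[X]}{2+\delta}\right) \;\leq\; \exp(-c_4 \ln n) \;=\; n^{-c_4},
\]
where $c_4 = const \cdot \delta^2(\tfrac{1}{3}-\epsilon)/(2+\delta)$ is a positive constant depending only on $\epsilon$, $d$, and the constant in $\lambda = const \cdot \ln n$.

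Finally, since $X$ is integer-valued and $B = \lfloor (\tfrac{1}{3}-d)\lambda \rfloor$, the event $X > B$ implies $X \geq (\tfrac{1}{3}-d)\lambda$, so the same tail bound gives $\Pr[X > B] \leq n^{-c_4}$, establishing S4 with high probability. I expect no real obstacle here: the argument is a direct analogue of the S3 proof with the roles of correct and Byzantine swapped, and the only subtlety is checking that the relation $d < \epsilon$ makes $\delta$ strictly positive so that the Chernoff upper tail is nontrivial, which is immediate from the parameter restrictions imposed in Section~\ref{sec:committees}.
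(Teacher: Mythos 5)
Your proof is correct and follows essentially the same route as the paper's: the same Chernoff upper-tail bound applied to the binomial count of Byzantine committee members, with the same choice of $\delta=\frac{\epsilon-d}{\frac{1}{3}-\epsilon}$ (yours is derived by solving $(1+\delta)E[X]=(\frac{1}{3}-d)\lambda$, the paper's is stated directly), the same constant $c_4$, and the same final integrality step to pass from $(\frac{1}{3}-d)\lambda$ to $B=\lfloor(\frac{1}{3}-d)\lambda\rfloor$. The stochastic-dominance remark for the case of fewer than $f$ corruptions is a harmless refinement.
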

\begin{proof}

Let $X$ be a random variable that represents the number of Byzantine processes that are sampled to $C(s,\lambda)$.
$X\sim Bin((\frac{1}{3}-\epsilon)n,\frac{const\cdot \ln n}{n})$, thus $E[X]=(\frac{1}{3}-\epsilon)const\cdot \ln n$.

Placing $\delta=\frac{\epsilon-d}{\frac{1}{3}-\epsilon}\geq 0$ in (\ref{chernoff}) we get:

\begin{equation*}
Pr[X\geq (1+\frac{\epsilon-d}{\frac{1}{3}-\epsilon})(\frac{1}{3}-\epsilon)const\cdot \ln n]\leq e^{-\frac{(\frac{\epsilon-d}{\frac{1}{3}-\epsilon})^2(\frac{1}{3}-\epsilon)const\cdot \ln n}{2+(\frac{\epsilon-d}{\frac{1}{3}-\epsilon})}},
\end{equation*}
\begin{equation*}
Pr[X\geq (\frac{\frac{1}{3}-d}{\frac{1}{3}-\epsilon})(\frac{1}{3}-\epsilon)const\cdot \ln n]\leq e^{-\frac{\frac{(\epsilon-d)^2}{\frac{1}{3}-\epsilon}const\cdot \ln n}{2+(\frac{\epsilon-d}{\frac{1}{3}-\epsilon})}},
\end{equation*}
\begin{equation*}
Pr[X\geq (\frac{1}{3}-d)const\cdot \ln n]\leq e^{-\frac{\frac{(\epsilon-d)^2}{\frac{1}{3}-\epsilon}const\cdot \ln n}{2+(\frac{\epsilon-d}{\frac{1}{3}-\epsilon})}}.
\end{equation*}

Denote by $c_4$ the constant
${\frac{const\cdot \frac{(\epsilon-d)^2}{\frac{1}{3}-\epsilon}}{2+(\frac{\epsilon-d}{\frac{1}{3}-\epsilon})}}$. We get: 

\begin{equation*}
Pr[X\geq (\frac{1}{3}-d)const\cdot \ln n]\leq e^{-c_4\ln n}.
\end{equation*}

Thus, 
\begin{equation*}
Pr[X< (\frac{1}{3}-d)const\cdot \ln n]=Pr[X< (\frac{1}{3}-d)\lambda] > 1-\frac{1}{e^{c_4\ln n}}=1-\frac{1}{n^{c_4}}.
\end{equation*}

Since X must be an integer, it follows that $X\leq B=\left \lfloor{(\frac{1}{3}-d)\lambda}\right \rfloor$ whp.

\end{proof}

\end{proof}

\WW*

\begin{proof}
The set $P_2$ contains at most $|C(s,\lambda)\setminus P_1|$ processes that aren't in $P_1$. By S1, and since $P_1\subset C(s,\lambda)$:
\begin{equation*}
|C(s,\lambda)\setminus P_1|\leq
(1+d)\lambda-W =
(1+d)\lambda-\left \lceil{(\frac{2}{3}+3d)\lambda}\right \rceil \leq (1+d)\lambda-(\frac{2}{3}+3d)\lambda = (\frac{1}{3}-2d)\lambda.
\end{equation*}

The remaining processes in $P_2$ are also in $P_1$, so

\begin{equation*}
|P_1\cap P_2|\geq 
W-(\frac{1}{3}-2d)\lambda=
\left \lceil{(\frac{2}{3}+3d)\lambda}\right \rceil -(\frac{1}{3}-2d)\lambda
\geq 
(\frac{2}{3}+3d)\lambda -(\frac{1}{3}-2d)\lambda= (\frac{1}{3}+5d)\lambda.
\end{equation*}
Finally, 

\begin{equation*}
|P_1\cap P_2|-B =
|P_1\cap P_2|-\left \lfloor{(\frac{1}{3}-d)\lambda}\right \rfloor\geq (\frac{1}{3}+5d)\lambda-(\frac{1}{3}-d)\lambda=6d\lambda>\frac{6\lambda}{\lambda} \geq 1,
\end{equation*}

as requested.
\end{proof}

\WB*

\begin{proof}
    
The set $P_2$ contains at most $|C(s,\lambda)\setminus P_1|$ processes that aren't in $P_1$. By S1, and since $P_1\subset C(s,\lambda)$:
\begin{equation*}
|C(s,\lambda)\setminus P_1|\leq
(1+d)\lambda-(B+1) =
(1+d)\lambda-(\left \lfloor{(\frac{1}{3}-d)\lambda}\right \rfloor+1)
\leq
(1+d)\lambda-((\frac{1}{3}-d)\lambda-1)-1
=(\frac{2}{3}+2d)\lambda.
\end{equation*}

Therefore,

\begin{equation*}
|P_2|-|C(s,\lambda)\setminus P_1|\geq 
W-(\frac{2}{3}+2d)\lambda=
\left \lceil{(\frac{2}{3}+3d)\lambda}\right \rceil-(\frac{2}{3}+2d)\lambda
\geq
{(\frac{2}{3}+3d)\lambda}-(\frac{2}{3}+2d)\lambda
=d\lambda >\frac{\lambda}{\lambda}=1 ,
\end{equation*}

and so $|P_1\cap P_2|\geq 1$, as requested.
\end{proof}

\section{WHP Coin Proofs}
\label{whp_coin_appendix}

In the committee-based protocol, a value $v$ is \emph{common} if at least $B+1$ correct processes in $C(\textsc{second},\lambda)$ have $v_i=v$ at the end of phase 1. The next lemma adapts the lower bound of Lemma \ref{common_vals} on the number of common values to the committee-based protocol.
 
\begin{lemma}
\label{common_vals_comm}
In Algorithm \ref{alg:shared_coin_comm_protocol} whp, $c \geq \frac{d(11-3d)}{1+9d}\lambda$.
\end{lemma}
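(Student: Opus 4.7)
The plan is to adapt the table-based counting argument from Lemma~\ref{common_vals} to the committee setting, with the roles previously played by $n$ and $f$ taken over by the sizes of the two sampled committees and by the sampling parameters $W$ and $B$. The whp aspect comes from conditioning on properties S1--S4 for both $C(\textsc{first},\lambda)$ and $C(\textsc{second},\lambda)$, which hold whp since Algorithm~\ref{alg:shared_coin_comm_protocol} uses only a constant number of committees.

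Concretely, I would build a table $T$ whose rows are indexed by the processes of $C(\textsc{second},\lambda)$ and whose columns are indexed by the values broadcast in phase~1, i.e., by the processes of $C(\textsc{first},\lambda)$. Set $T[i,j]=1$ iff $p_i\in C(\textsc{second},\lambda)$ processed a valid $\lr{\textsc{first},v_j}$ from $p_j\in C(\textsc{first},\lambda)$ before sending its own \textsc{second} message at line~\ref{l.com.send_second}. The guard in line~\ref{l.com.wait_first} forces every correct row to contain exactly $W$ ones; as in Lemma~\ref{common_vals}, fill each Byzantine row arbitrarily with $W$ ones. The table therefore contains $|C(\textsc{second},\lambda)|\cdot W$ ones in total, and by~S4 any column has at most $B$ ones contributed by Byzantine rows, so any column holding at least $2B+1$ ones corresponds to a value received by at least $B+1$ correct processes of $C(\textsc{second},\lambda)$, i.e., to a common value.

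Letting $k$ denote the number of such columns, $c\geq k$. Partitioning the ones by column type and using $\leq 2B$ ones per non-common column and $\leq |C(\textsc{second},\lambda)|$ ones per common column yields
\[
k\cdot|C(\textsc{second},\lambda)| \;+\; (|C(\textsc{first},\lambda)|-k)\cdot 2B \;\geq\; |C(\textsc{second},\lambda)|\cdot W,
\]
so that
\[
k \;\geq\; \frac{|C(\textsc{second},\lambda)|\cdot W \;-\; 2B\cdot|C(\textsc{first},\lambda)|}{|C(\textsc{second},\lambda)|-2B}.
\]
The proof then concludes by substituting the whp ranges from S1/S2 for the two committee sizes, together with $W=\lceil(2/3+3d)\lambda\rceil$ and $B=\lfloor(1/3-d)\lambda\rfloor$, and minimising the right-hand side over those ranges; after routine simplification---exactly parallel to the derivation of $\frac{9\epsilon}{1+6\epsilon}n$ in Lemma~\ref{common_vals}---one obtains $\frac{d(11-3d)}{1+9d}\lambda$.

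The combinatorial heart of the argument lifts almost verbatim from Lemma~\ref{common_vals}, so I expect the main obstacle to be the final simplification step rather than the structure. The ratio above is non-monotone in $|C(\textsc{first},\lambda)|$ and $|C(\textsc{second},\lambda)|$ separately (its partial derivative in $|C(\textsc{second},\lambda)|$ flips sign depending on whether $|C(\textsc{first},\lambda)|\gtrless W$), so one must carefully identify the worst-case size configuration within the whp intervals, and then carry the ceiling/floor in $W$ and $B$ through the arithmetic so that the expression collapses to the stated denominator $1+9d$.
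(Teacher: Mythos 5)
Your proposal is correct and follows essentially the same route as the paper's proof: the same table indexed by $C(\textsc{second},\lambda)\times C(\textsc{first},\lambda)$, the same use of S4 to relate columns with at least $2B+1$ ones to common values, the same counting inequality $k \geq \frac{n_2 W - 2Bn_1}{n_2-2B}$, and the same substitution of the S1/S2 size bounds. The only (immaterial) difference is the last step: rather than genuinely minimizing the ratio over the size ranges (which would require the case analysis on $n_1\gtrless W$ that you flag), the paper lower-bounds the numerator using $n_2\geq(1-d)\lambda$ and $n_1\leq(1+d)\lambda$ while separately upper-bounding the positive denominator using $n_2\leq(1+d)\lambda$; this mixed substitution sidesteps the non-monotonicity issue entirely and collapses directly to $\frac{d(11-3d)}{1+9d}\lambda$.
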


\begin{proof}

Let $n_1=|C(\textsc{first},\lambda)|,n_2=|C(\textsc{second},\lambda)|$.
We define a table T with $n_2$ rows and $n_1$ columns. 
For each correct process $p_i\in C(\textsc{second},\lambda)$ and each $0\leq j\leq n_1-1$, $T[i,j]=1$ iff $p_i$ receives $\lr{\textsc{first},v}$ from $p_j\in P_1$ before sending the \textsc{second} message in line \ref{l.com.send_second}.
Each row of a correct process contains exactly
$W$ ones since it waits for $W$ $\lr{\textsc{first},v}$ messages (line \ref{l.com.wait_first}). Each row of a faulty process in $C(\textsc{second},\lambda)$ is arbitrarily filled with $W$ ones and $n_1-W$ zeros.
Thus the total number of ones in the table is $n_2W$ and the total number of zeros is $n_2(n_1-W)$.
Let $k$ be the number of columns with at least $2B+1$ ones.
Each column represents a value sent by a process in $C(\textsc{first},\lambda)$.
By S4, whp, at most $B$ of the processes that receive this value are Byzantine.
Thus, whp, out of any $2B+1$ ones in each of these columns, at least $B+1$ represent correct processes that receive this value and it follows that $c\geq k$.

Denote by $x$ the number of ones in the remaining columns. Because each column has at most $n_2$ ones we get:

\begin{equation}
x \geq n_2W-kn_2 =
n_2 \left \lceil{(\frac{2}{3}+3d)\lambda}\right \rceil -kn_2
\geq n_2(\frac{2}{3}+3d)\lambda-kn_2.
\end{equation}

And because the remaining columns have at most $2B$ ones:
\begin{equation}
x \leq 2B(n_1-k)
= 2\left \lfloor{(\frac{1}{3}-d)\lambda}\right \rfloor(n_1-k)
\leq 2(\frac{1}{3}-d)\lambda(n_1-k).
\end{equation}

Combining $(1),(2)$ we get:

\begin{equation*}
2(\frac{1}{3}-d)\lambda(n_1-k) \geq
n_2(\frac{2}{3}+3d)\lambda-kn_2
\end{equation*}

\begin{equation*}
kn_2-2\lambda k(\frac{1}{3}-d) \geq
n_2(\frac{2}{3}+3d)\lambda-2(\frac{1}{3}-d)\lambda n_1
\end{equation*}

\begin{equation*}
k(n_2-2\lambda (\frac{1}{3}-d)) \geq \lambda(
n_2(\frac{2}{3}+3d)-2(\frac{1}{3}-d)n_1)
\end{equation*}

\begin{equation*}
k \geq \frac{\lambda(
n_2(\frac{2}{3}+3d)-2(\frac{1}{3}-d)n_1)}{n_2-2\lambda (\frac{1}{3}-d)}
\end{equation*}

By S2 for $C(\textsc{second},\lambda)$, whp $n_2\geq(1-d)\lambda$ and we get:

\begin{equation*}
k \geq \frac{\lambda(
(1-d)\lambda(\frac{2}{3}+3d)-2(\frac{1}{3}-d)n_1)}{n_2-2\lambda (\frac{1}{3}-d)}
\end{equation*}

By S1 for $C(\textsc{first},\lambda)$ and $C(\textsc{second},\lambda)$, whp $n_1,n_2\leq (1+d)\lambda$ and we get:

\begin{equation*}
k \geq \frac{\lambda \bigg[
(1-d)\lambda(\frac{2}{3}+3d)-2(\frac{1}{3}-d)(1+d)\lambda \bigg]}{(1+d)\lambda-2\lambda (\frac{1}{3}-d)}=\frac{\lambda\bigg[
(1-d)(\frac{2}{3}+3d)-2(\frac{1}{3}-d)(1+d)\bigg]}{(1+d)-2 (\frac{1}{3}-d)}
\end{equation*}

Finally, we get whp:

\begin{equation*}
c \geq k \geq \frac{d(11-3d)}{1+9d}\lambda.
\end{equation*}
as required.

\end{proof}

Let $v_{min}\triangleq \displaystyle \min_{ p_i \in C(\textsc{first},\lambda)}\{VRF_i(r)\}$. Similiarly to Lemma \ref{common_prob}, we prove that the probability that it is common is bounded by a constant, whp. I.e., we show that $Prob[v_{min}\;is\;common]\geq const\cdot g(n)$ where $g(n)$ goes to $1$ as $n$ goes to infinity.

\begin{lemma}
\label{common_prob_comm}
whp $Prob[v_{min}\;is\;common]\geq \frac{2}{3(1-d)}\cdot \frac{c-B}{(1+d)\lambda-B}$.
\end{lemma}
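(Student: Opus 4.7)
The plan is to mirror the structure of the proof of Lemma~\ref{common_prob}, replacing the full population $\Pi$ by the first-phase committee $C(\textsc{first},\lambda)$ and the global Byzantine bound $f$ by the committee-level bound $B$, and then to invoke the sampling guarantees S1, S2, and S4 to quantify the relevant sizes.

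First I would argue that under the delayed-adaptive adversary the scheduling of the \textsc{first} messages, as well as the adversary's choice of which members of $C(\textsc{first},\lambda)$ to corrupt before they speak, are causally independent of the VRF values those messages carry. This uses the assumption that the invocation of whp\textunderscore coin$(r)$ by one process is causally independent of progress at other processes, so the \textsc{first} messages sent by different correct processes are mutually causally concurrent; by the definition of the delayed-adaptive adversary, such messages are scheduled without access to their VRF contents. Consequently, conditioned on $C(\textsc{first},\lambda)$ and on the set of its Byzantine members, the VRF outputs of its correct members are jointly uniform, and each correct member is equally likely to hold $v_{min}$.

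Second I would decompose
\[
Prob[v_{min}\text{ is common}]\;\geq\; Prob[A]\cdot Prob[v_{min}\text{ is common}\mid A],
\]
where $A$ is the event ``the holder of $v_{min}$ is correct.'' Writing $n_1=|C(\textsc{first},\lambda)|$ and $n_1^b$ for the number of Byzantine members in it, I get $Prob[A]\geq (n_1-n_1^b)/n_1$ from the symmetry above; then invoking S4 ($n_1^b\leq B$) and S2 ($n_1\geq(1-d)\lambda$), and using that $(n_1-B)/n_1$ is increasing in $n_1$, gives $Prob[A]\geq \frac{(1-d)\lambda-B}{(1-d)\lambda}=\frac{2}{3(1-d)}$ after substituting $B=\lfloor(1/3-d)\lambda\rfloor$.

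Third, for the conditional factor I would again use symmetry to rewrite it as the ratio (number of correct members of $C(\textsc{first},\lambda)$ whose VRF value is common)$/n_1^c$. Because VRF uniqueness assigns each common value a single originator inside $C(\textsc{first},\lambda)$, and at most $n_1^b\leq B$ of those originators can be Byzantine, at least $c-B$ common values come from correct originators. Thus the conditional probability is at least $(c-B)/(n_1-n_1^b)$. Since $c\leq n_1$, this ratio is minimized over $n_1^b\leq B$ at $n_1^b=B$, giving $(c-B)/(n_1-B)$, and is further minimized by taking $n_1$ as large as possible, $n_1\leq (1+d)\lambda$ by S1, yielding $(c-B)/((1+d)\lambda-B)$. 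Multiplying the two lower bounds produces the claimed inequality, and the whp qualification is inherited from S1, S2, S4, each applied to the (constantly many) committees involved. The main obstacle I anticipate is the first step: one must carefully justify, from the delayed-adaptive definition together with the causal-independence hypothesis on whp\textunderscore coin$(r)$, that the adversary's corruption and scheduling choices inside $C(\textsc{first},\lambda)$ are made obliviously to the correct members' VRF outputs; only then are the symmetric counting arguments used in the two factor bounds legitimate.
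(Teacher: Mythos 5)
Your proposal is correct and follows essentially the same route as the paper's proof: the same causal-concurrency argument to establish that corruption and scheduling of \textsc{first} messages are oblivious to the VRF outputs, the same factorization into $Prob[\text{holder of }v_{min}\text{ is correct}]\cdot Prob[v_{min}\text{ common}\mid\text{correct holder}]$, and the same use of S1, S2, S4 to bound the two factors by $\frac{2}{3(1-d)}$ and $\frac{c-B}{(1+d)\lambda-B}$ respectively. If anything you are slightly more explicit than the paper about the monotonicity in $n_1$ and $n_1^b$ (note only that the quantity minimized at $n_1^b=B$ is $(c-n_1^b)/(n_1-n_1^b)$, not $(c-B)/(n_1-n_1^b)$ as written), but this does not change the argument.
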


\begin{proof}

Notice that we assume that the invocation of whp\textunderscore coin$(r)$ by every process is causally independent of its progress at other processes. Hence, for any two processes $p_i, p_j\in C(\textsc{first},\lambda)$, the messages $\lr{\textsc{first},v_i}$, $\lr{\textsc{first},v_j}$ are causally concurrent.
Thus, due to our \emph{delayed-adaptive adversary} definition,
these messages are scheduled by the adversary regardless of their content, namely their VRF random values.
Notice that the adversary can corrupt processes before they initially send their VRF values.
By S4 there are at most $B$ Byzantine processes in $C(\textsc{first},\lambda)$. 
Since the adversary cannot predict the VRF outputs, the probability for a given process to be corrupted before sending its \textsc{first} messages is at most $\frac{B}{|C(\textsc{first},\lambda)|}$.
The adversary is oblivious to the correct processes' VRF values when it schedules their first phase messages. Therefore, each of them has the same probability to become common. Since at most $B$ common values are from Byzantine processes, this probability is at least $\frac{c-B}{|C(\textsc{first},\lambda)|-B}$.
 We conclude that $v_{min}$ is common with probability at least $(1-\frac{B}{|C(\textsc{first},\lambda)|})\frac{c-B}{|C(\textsc{first},\lambda)|-B}$.
By S1 and S2 we get that $(1-d)\lambda \leq |C(\textsc{first},\lambda)| \leq (1+d)\lambda$ whp.

Thus, whp, $v_{min}$ is common with probability at least $(1-\frac{B}{(1-d)\lambda})\frac{c-B}{(1+d)\lambda-B}=
(1-\frac{\left \lfloor{(\frac{1}{3}-d)\lambda}\right \rfloor}{(1-d)\lambda})\frac{c-B}{(1+d)\lambda-B} \geq
(1-\frac{(\frac{1}{3}-d)\lambda}{(1-d)\lambda})\frac{c-B}{(1+d)\lambda-B}=
\frac{2}{3(1-d)}\cdot \frac{c-B}{(1+d)\lambda-B}
$.

\end{proof}

\begin{lemma}
\label{common_global_min_comm}
If $v_{min}$ is common then whp each correct process holds $v_{min}$ at the end of phase 2.
\end{lemma}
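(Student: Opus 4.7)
The plan is to follow the structure of the non-committee analogue (Lemma \ref{common_global_min}), replacing the ``$f+1$ correct processes vs.\ $n-f$ quorum'' intersection argument with the committee intersection bound given by Corollary S6. The assumption that $v_{min}$ is common provides, by definition, a set of at least $B+1$ correct processes in $C(\textsc{second},\lambda)$ whose local variable $v_i$ equals $v_{min}$ at the end of phase $1$. Crucially, since $v_{min}$ is the minimum of the VRF outputs over $C(\textsc{first},\lambda)$, no process can ever lower its $v_i$ below $v_{min}$; hence these processes still hold $v_{min}$ when they execute line \ref{l.com.send_second} and therefore broadcast $\lr{\textsc{second},v_{min}}$.

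Now fix an arbitrary correct process $p_k$. To exit its second \textbf{upon receiving} handler, $p_k$ must collect a set $P_2$ of $W$ validly sampled \textsc{second} messages; by the validity check in the receiver, $P_2 \subset C(\textsc{second},\lambda)$ with $|P_2|=W$. Let $P_1$ be any size-$(B+1)$ subset of the correct committee members identified above, so $P_1 \subset C(\textsc{second},\lambda)$ with $|P_1|=B+1$. Applying Corollary S6 yields $|P_1 \cap P_2| \geq 1$, so $p_k$ receives $\lr{\textsc{second},v_{min}}$ from at least one correct sender; since $v_{min}$ is the global minimum, $p_k$ sets $v_i$ to $v_{min}$ and it stays there. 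As $p_k$ was arbitrary, every correct process holds $v_{min}$ at the end of phase $2$.

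I expect the only real obstacle to be bookkeeping around Corollary S6: one must check that both $P_1$ and $P_2$ are genuinely subsets of the same committee $C(\textsc{second},\lambda)$ and have sizes $B+1$ and $W$ respectively, so that the corollary applies verbatim. The ``whp'' in the conclusion is inherited from the ``whp'' guarantees of properties S1--S4 on which Corollary S6 rests; since Algorithm \ref{alg:shared_coin_comm_protocol} uses only a constant number of committees, a union bound ensures these properties hold simultaneously for $C(\textsc{second},\lambda)$ whp, and no further probabilistic analysis is needed.
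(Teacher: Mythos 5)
Your proof is correct and follows essentially the same route as the paper's: the $B+1$ correct members of $C(\textsc{second},\lambda)$ holding $v_{min}$ broadcast it in phase 2, and Corollary S6 applied to that set and any correct process's $W$-sized receive set guarantees a correct sender in the intersection. Your extra bookkeeping (that $v_{min}$ cannot be undercut, and that both sets lie in the same committee) only makes explicit what the paper leaves implicit.
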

\begin{proof}
Since $v_{min}$ is common, at least $B+1$ correct members of $C(\textsc{second},\lambda)$ receive it by the end of phase 1 and update their local values to $v_{min}$. During the second phase, each correct process hears from $W$ members of $C(\textsc{second},\lambda)$ whp. By S6, this means that it hears from at least one correct process that has updated its value to $v_{min}$ and sent it whp.
\end{proof}

\begin{lemma}
\label{safety_comm}
Let $\rho=\frac{18d^2+27d-1}{3(5+6d)(1-d)(1+9d)}$. Algorithm \ref{alg:shared_coin_comm_protocol} implements a shared coin with success rate $\rho$, whp.
\end{lemma}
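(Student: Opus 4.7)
\medskip

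\noindent\textbf{Proof plan.} The proof will follow the exact same chain of inequalities used in the non-committee safety argument (Lemma \ref{safety}), merely substituting the committee-based counterparts of the three supporting lemmas. Specifically, I will argue, for any fixed $b\in\{0,1\}$,
\begin{align*}
 \Pr[\text{all correct processes output }b]
 &\;\geq\; \Pr[\text{all correct processes hold }v_{\min}\text{ at end of phase 2 and }\operatorname{LSB}(v_{\min})=b]\\
 &\;=\; \tfrac{1}{2}\cdot\Pr[\text{all correct processes hold }v_{\min}\text{ at end of phase 2}]\\
 &\;\geq\; \tfrac{1}{2}\cdot\Pr[v_{\min}\text{ is common}],
\end{align*}
where the first equality uses the pseudorandomness of the VRF (so the LSB of $v_{\min}$ is $0$ or $1$ with probability $\tfrac12$ each, independently of whether $v_{\min}$ is common), and the last inequality is exactly Lemma \ref{common_global_min_comm}.

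Next, I will apply Lemma \ref{common_prob_comm} to lower-bound $\Pr[v_{\min}\text{ is common}]$ by $\tfrac{2}{3(1-d)}\cdot\tfrac{c-B}{(1+d)\lambda-B}$, and then plug in the lower bound $c\geq \tfrac{d(11-3d)}{1+9d}\lambda$ from Lemma \ref{common_vals_comm}. Since the fraction $\tfrac{c-B}{(1+d)\lambda-B}$ is decreasing in $B$ (because $c<(1+d)\lambda$), I will substitute the upper bound $B\leq (\tfrac13-d)\lambda$ in both numerator and denominator to obtain a valid lower bound, and then expand the result and collect terms to obtain the stated $\rho=\tfrac{18d^2+27d-1}{3(5+6d)(1-d)(1+9d)}$; this is a purely mechanical algebraic manipulation analogous to the computation at the end of Lemma \ref{common_vals}.

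Finally, I will collect the whp qualifiers: Lemmas \ref{common_vals_comm}, \ref{common_prob_comm}, and \ref{common_global_min_comm} each hold whp, and since the protocol samples only a constant number of committees, by a union bound all four sampling properties (S1)–(S4) hold whp for every committee used. Conditioning on this high-probability event, the chain above yields the claimed success rate $\rho$, which completes the proof.

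\medskip

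\noindent\textbf{Main obstacle.} The main subtlety is bookkeeping of the high-probability conditioning: the randomness that drives $c$, the identity of $v_{\min}$, and the coin-LSB argument must all be handled on top of the high-probability event that all sampled committees satisfy (S1)–(S4). I will address this by first conditioning on the whp sampling event once (so that $c$, $B$, $W$, and the committee sizes satisfy the deterministic inequalities used in Lemmas \ref{common_vals_comm}–\ref{common_global_min_comm}), and only then taking probabilities over the VRF outputs and over which process is corrupted by the delayed-adaptive adversary, exactly as in the non-committee proof. The remaining step is a routine but careful algebraic simplification to match the stated closed form of $\rho$.
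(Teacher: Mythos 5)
Your proposal is correct and follows essentially the same chain of inequalities as the paper's proof: reduce to $\Pr[v_{\min}\text{ is common}]$ via Lemma \ref{common_global_min_comm}, then apply Lemmas \ref{common_prob_comm} and \ref{common_vals_comm} and simplify. One small bookkeeping note: your monotonicity substitution of $B\le(\frac{1}{3}-d)\lambda$ in \emph{both} numerator and denominator actually yields the slightly stronger bound $\frac{18d^2+27d-1}{6(1+3d)(1-d)(1+9d)}$ rather than literally "the stated $\rho$" --- the paper instead lower-bounds the floor by $B\ge(\frac{1}{3}-d)\lambda-1$ in the denominator and then bounds the leftover $+1$ by $+\lambda$, which is where the factor $(5+6d)$ comes from; since $6(1+3d)\le 3(5+6d)$, your bound dominates $\rho$ and the lemma as stated still follows.
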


\begin{proof}
Denote $n_1=|C(\textsc{first},\lambda)|$. We bound whp the probability that all correct processes output $b\in \{0,1\}$ as follows:

 $Prob[$all correct processes output $b]\geq
 Prob[$all\;correct\;processes\;have\;the\;same\;$v_i$ at the end of phase 2\;and\;its\;LSB\;is\;$b]\geq Prob[$all\;correct\;processes have $v_i=v_{min}$  at the end of phase 2 and its LSB is $b]=\frac{1}{2} \cdot Prob[$all\;correct\;processes have $v_i=v_{min}]\stackrel{\text{Lemma \ref{common_global_min_comm}}}{\geq}
 \frac{1}{2} \cdot Prob[v_{min}$ is common$]\stackrel{\text{Lemma \ref{common_prob_comm}}}{\geq} \frac{1}{2} \cdot \frac{2}{3(1-d)}\cdot \frac{c-B}{(1+d)\lambda-B} \stackrel{\text{Lemma \ref{common_vals_comm}}}{\geq}
 \frac{1}{3(1-d)}\cdot \frac{\frac{d(11-3d)}{1+9d}\lambda-B}{(1+d)\lambda-B} =
 \frac{1}{3(1-d)}\cdot \frac{\frac{d(11-3d)}{1+9d}\lambda-\left \lfloor{(\frac{1}{3}-d)\lambda}\right \rfloor}{(1+d)\lambda-\left \lfloor{(\frac{1}{3}-d)\lambda}\right \rfloor} \geq
 \frac{1}{3(1-d)}\cdot \frac{\frac{d(11-3d)}{1+9d}\lambda-(\frac{1}{3}-d)\lambda}{(1+d)\lambda-((\frac{1}{3}-d)\lambda-1)}=
 \frac{1}{3(1-d)}\cdot \frac{\lambda\frac{18d^2+27d-1}{27d+3}}{\lambda(\frac{2}{3}+2d)+1}\geq
 \frac{1}{3(1-d)}\cdot \frac{\lambda\frac{18d^2+27d-1}{27d+3}}{\lambda(\frac{2}{3}+2d)+\lambda}=
\frac{18d^2+27d-1}{3(5+6d)(1-d)(1+9d)}$.
 
\end{proof}

We have shown a bound on the coin's success rate whp. Since $d>0.0362$, the coin's success rate is a positive constant whp.
We next prove that the coin ensures liveness whp.
 
\begin{lemma}
\label{shared_coin_termination_comm}
If all correct processes invoke Algorithm \ref{alg:shared_coin_comm_protocol} then all correct processes return whp.
\end{lemma}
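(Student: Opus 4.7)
The plan is to reduce termination to property S3 applied to both sampled committees and then chase the message flow deterministically. The whp qualifier enters the proof only through Claim~\ref{sampling}. Since Algorithm~\ref{alg:shared_coin_comm_protocol} invokes $\emph{sample}$ with exactly two distinct strings, $\textsc{first}$ and $\textsc{second}$, I would apply Claim~\ref{sampling} separately to $C(\textsc{first},\lambda)$ and $C(\textsc{second},\lambda)$ and use a union bound over these two events. Because each individual sampling failure occurs with probability at most $1/n^{c}$ for some constant $c>0$, the joint event that properties S1--S4 hold for both committees occurs with high probability, and the remainder of the argument is purely deterministic conditioned on this event.

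Given this conditioning, I would first observe that S3 for $C(\textsc{first},\lambda)$ guarantees at least $W$ correct processes $p_i$ for which the guard on line~\ref{l.com.sample1} evaluates to true. Each such process computes $v_i = \emph{VRF}_i(r)$ and broadcasts $\langle \textsc{first}, v_i\rangle$; by reliability of the links, every correct process eventually delivers each of these $W$ messages together with their valid VRF proofs. Next, S3 for $C(\textsc{second},\lambda)$ guarantees at least $W$ correct processes $p_j$ passing the guard on line~\ref{l.com.sample2}. For every such $p_j$, the $\textsc{first}$-message handler accumulates identifiers into $\emph{first-set}$ for every validly sampled sender, and the previous step ensures that at least $W$ distinct identifiers from correct first-committee members eventually land in this set. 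Hence $|\emph{first-set}| = W$ triggers at $p_j$, which then broadcasts $\langle \textsc{second}, v_j\rangle$ on line~\ref{l.com.send_second}.

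Finally, the $\textsc{second}$-message handler runs at every correct process, not only committee members, so the $W$ valid $\textsc{second}$ broadcasts issued by the correct members of $C(\textsc{second},\lambda)$ eventually drive $|\emph{second-set}|$ to $W$ at every correct process, which then returns $\emph{LSB}(v_i)$. Combining the three steps, every correct process returns, and the whole derivation is valid on the high-probability event fixed by the union bound.

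I do not anticipate a hard obstacle: the chief subtlety is matching the sample-guarded line structure to the corresponding committee, in particular noting that only members of $C(\textsc{second},\lambda)$ ever send $\textsc{second}$ messages (so liveness of the second wait condition is controlled by $|C(\textsc{second},\lambda)|$ via S3 rather than by $n$), and that faulty processes can at worst delay delivery but cannot prevent the $W$ correct senders from eventually being heard on the asynchronous reliable links. Keeping the union bound over the constant number of committees explicit is what distinguishes this committee-based proof from the analogous Lemma~\ref{shared_coin_termination} in the non-committee setting.
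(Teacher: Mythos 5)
Your proof is correct and follows essentially the same route as the paper's: apply S3 to $C(\textsc{first},\lambda)$ to get $W$ correct first-phase senders, conclude that every correct member of $C(\textsc{second},\lambda)$ eventually fills \emph{first-set} and broadcasts a \textsc{second} message, then apply S3 to $C(\textsc{second},\lambda)$ so that every correct process eventually fills \emph{second-set} and returns. Your explicit union bound over the two committees just makes precise what the paper states once for all of its (constantly many) committees.
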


\begin{proof}
All correct processes in $C(\textsc{first},\lambda)$ send their message in the first phase.
At least $W$ of them are correct whp by S3. All correct processes in $C(\textsc{second},\lambda)$ eventually receive $W$ $\lr{\textsc{first},x}$ messages whp and send a message in the second phase. As ,whp, again $W$ correct processes send their messages (by S3), each correct process eventually receives $W$ $\lr{\textsc{second},x}$ messages and returns whp.
\end{proof}

From Lemma \ref{safety_comm} and Lemma \ref{shared_coin_termination_comm} we conclude:

\rewhpcoin*

\section{Approver proofs}
\label{approver_appendix}

\approvervalidity*
\begin{proof}
By Claim \ref{sampling} S4 holds whp for the four sampled committees. It remains to show that S4 implies validity. Since by S4 the number of Byzantine processes sampled to the init committee in line \ref{l.app.comm1} is at most $B$, no process receives $B+1$ messages with a value $w\neq v$. Thus, no correct process echoes $\lr{\textsc{echo},w}$ in line \ref{l.app.comm2_and_echo}. 
Because the number of Byzantine processes in $C(\lr{\textsc{echo},w},\lambda)$ in line \ref{l.app.comm2_and_echo} is also at most $B$, no correct process receives more than $B$ $\lr{\textsc{echo},w}$ messages.
As a result, since $B<W$, no $\lr{\textsc{ok},w}$ message is sent by any correct process.
Since ok messages carry proofs, no Byzantine process can send a valid $\lr{\textsc{ok},w}$ either. Therefore, the only possible value in the ok messages is $v$, and no other value is returned.
\end{proof}

\approveragreement*

\begin{proof}

By Claim \ref{sampling} and Corollary \ref{S5}, S4 and S5 hold whp for the four sampled committees. We show that S4 and S5 imply graded agreement.
Assume $p_i$ returns $\{v\}$ and $p_j$ returns $\{w\}$. Then $p_i$ receives $W$ $\lr{\textsc{ok},v}$ messages and $p_j$ receives $W$ $\lr{\textsc{ok},w}$ messages. 
By S5, two sets of size $W$ intersect by at least $(\frac{1}{3}-d)\lambda+1$ processes. Hence, since by S4 there are at most $B$ Byzantine processes in the ok committee, there is at least one correct process $p_k$ whose ok message is received by both $p_i$ and $p_j$ whp. It follows that $p_k$ sends $\lr{\textsc{ok},v}$ and $\lr{\textsc{ok},w}$. Since every correct process sends at most one ok message (line \ref{l.app.comm3}), $v=w$.
\end{proof}

\approvertermination*
\begin{proof}
By Claim \ref{sampling} S3 holds whp. We show that S3 implies termination. Because all correct processes invoke approve, every correct init committee member in line \ref{l.app.comm1} sends $\lr{\textsc{init},v_i}$.
Notice that $\frac{1}{2}W > (\frac{1}{3}-d)\lambda \geq B$. Hence, since the number of correct processes in the init committee is at least $W$ (S3) and correct processes may send at most two different initial values (Assumption \ref{two_values}), one of them is sent by at least $B+1$ correct processes.
Denote this value by $v$.
Every correct process receives this value from $B+1$ processes, and if it is sampled to $C(\lr{\textsc{echo},v},\lambda)$ in line \ref{l.app.comm2_and_echo} then it sends it to all other processes. 
Since $C(\lr{\textsc{echo},v},\lambda)$ also has at least $W$ correct processes (S3), every correct process $p$ receives $W$ $\lr{\textsc{echo},v}$ messages.
If $p$ is sampled to the ok committee in line \ref{l.app.comm3} and at this point $p$ has not yet sent an $\lr{\textsc{ok},*}$ message, it sends one.
Since there are at least $W$ correct processes that are sampled to the ok committee (S3) and they all send $\textsc{ok}$ messages (possibly for different values), every correct process receives $W$ $\textsc{ok}$ messages and returns the non-empty set of approved values.

\end{proof}
\end{appendices}

\end{document}